\DeclareMathOperator{\vor}{Vor}
\DeclareMathOperator{\Int}{int}
\DeclareMathOperator{\cl}{\mathrm{cl}}
\newcommand{\Rd}{\mathbb{R}^{d}}
\newcommand{\bs}{\boldsymbol}
\newcommand{\ind}{\mathbbm{1}}
\newcommand{\R}{{\mathbb{R}}}
\newcommand{\E}{\mathbb{E}}
\newcommand{\cX}{{\cal{X}}}
\newcommand{\cN}{{\cal{N}}}
\newcommand{\cP}{{\cal{P}}}
\newcommand{\cB}{{\cal{B}}}
\newcommand{\logg}{\log\log}
\newcommand{\eps}{\epsilon}
\newcommand{\set}[1]{\left\{ #1 \right\}}
\newcommand{\splitb}{\begin{split}}
\newcommand{\splite}{\end{split}}
\newcommand{\vsimp}{{V_{\mathrm{simp}}}}
\newcommand{\cI}{\mathcal{I}}
\newcommand{\headermath}[1]{\texorpdfstring{$#1$}{TEXT}}
\newcommand{\Sp}{\mathbb{S}}
\title{Morse Theory for the k-NN Distance Function} 
\author{Yohai Reani}{Viterbi Faculty of Electrical \& Computer Engineering, Technion - Israel Institute of Technology, Haifa, Israel \and \url{https://sites.google.com/view/yohaireani/home}}{syohai@campus.technion.ac.il}{https://orcid.org/0000-0002-0615-5789}{}
\author{Omer Bobrowski}
{
Viterbi Faculty of Electrical \& Computer Engineering, Technion - Israel Institute of Technology, Haifa, Israel \and The School of Mathematical Sciences, Queen Mary University of London, E14NS London, UK
\and \url{https://sites.google.com/site/omerbobrowski/}
}
{omer@ee.technion.ac.il}{https://orcid.org/0000-0002-0860-7099}
{}
\keywords{Applied topology, Morse theory, Distance function, k-nearest neighbor} 
\authorrunning{Y. Reani, and O. Bobrowski}
\begin{document}

\maketitle

\begin{abstract}
We study the $k$-th nearest neighbor distance function from a finite point-set  in $\R^d$. We provide a Morse theoretic framework to analyze the sub-level set topology. In particular, we present a simple combinatorial-geometric characterization for critical points and their indices, along  with detailed information about the possible changes in homology at the critical levels. We conclude by computing the expected number of critical points for a homogeneous Poisson process.
Our results deliver significant insights and tools for the analysis of persistent homology in order-$k$ Delaunay mosaics, and random $k$-fold coverage.
\end{abstract}

\section{Introduction}
Let $\cP$ be finite subset of $\R^d$, with $|\cP|\ge k$. We define the $k$-nearest neighbor distance ($k$-NN) function $d^{(k)}_{\cP}:\R^d\to \R^+$  as 
\[
d_{\cP}^{(k)}(x) := \min\set{r : |B_r(x)\cap \cP| \ge k},
\]
where $B_r(x)$ is a closed ball of radius $r$ centered at $x$. For $k=1$ we have the simple case of the distance function
\[
d_{\cP}^{(1)}(x) = d_{\cP}(x) := \min_{p\in\cP} \|x-p\|.
\]
The $k$-NN distance function arises naturally in numerous applications, including coverage in sensor networks, shape reconstruction, and clustering \cite{dudani1976distance,shamos1975closest}. A key reason for the interest in $d^{(k)}_{\cP}$ comes from fact that its sub-level sets are the $k$-fold covers, i.e.,
\[
(d_{\cP}^{(k)})^{-1}((-\infty, r]) = B_r^{(k)}(\cP) :=  \set{x\in \R^d : |B_r(x)\cap \cP| \ge k}.
\]  
In other words $B_r^{(k)}(\cP)$ contains all points that are covered by at least $k$ balls of radius $r$, centered at $\cP$.  For $k=1$ we denote $B_r(\cP) := B_r^{(1)}(\cP)$, which is simply the union of the balls around $\cP$.
Our main goal in this paper is to present a simple and comprehensive Morse theory for $d_{\cP}^{(k)}$, which is key to future study of this function within the context of applied and stochastic topology.

\emph{Morse theory} \cite{milnor_morse_1963} lies at the intersection of topology and analysis, linking local differential properties to global structural changes. Specifically, it analyzes how critical points of different indexes affect the homotopy type of the sub-level sets of a function. The classical definition of Morse theory applies to smooth functions, where the location and index of the critical points are determined by the gradient and Hessian, respectively. As  $d_{\cP}^{(k)}$ is not a differentiable function, the original notions do not apply anymore.

In \cite{bobrowski_distance_2014} the authors provided a combinatorial-geometric description for the critical points of the distance function $d_{\cP}$, their index and homological effect, based on an adaptation of Morse theory to min-type functions \cite{gershkovich_morse_1997}.  
The key property of $d_{\cP}$ which enabled the results in \cite{bobrowski_distance_2014} is that $d_{\cP}^2$ is a min-type function, i.e., it can be expressed as the minimum of a finite collection of differentiable functions. This property, however, does not extend to $d_{\cP}^{(k)}$ ($k>1$), rendering the previous approach inapplicable. In response, our paper adopts an alternative strategy, employing a more expansive Morse-theoretic framework  \cite{agrachev1997morse} designed for piecewise smooth functions. 
Leveraging this framework, we establish a simplified combinatorial-geometric representation of critical points and their homological effect.
Notably, this description generalizes the one  in \cite{bobrowski_distance_2014} for the distance function $d_{\cP}$. 

A key motivation for this work is the study of random $k$-fold coverage \cite{chiu_stochastic_2013,edelsbrunner2021multi,flatto_random_1977,hall_coverage_1985,janson_random_1986,moran1962random,penrose2023random}.
While the $k$-fold coverage process  has an intrinsic mathematical interest, it also has applications in numerous fields. For instance, in cellular networks, $k$-fold coverage provides redundancy that guarantees the  network robustness to antennae failures \cite{wang_coverage_2011}. In shape reconstruction, guaranteeing $k$-fold coverage is useful in the context of outliers removal \cite{edelsbrunner2021multi, sheehy2012multicover}.
Other examples include wireless communication \cite{haenggi_stochastic_2009}, stochastic optimization \cite{zhigljavsky_stochastic_2007}, topological data analysis (TDA) \cite{bobrowski_homological_2019}, immunology \cite{moran1962random}, and more \cite{athreya_coverage_2004,cuevas_boundary_2004}.

A related theoretical motivation comes from the field of stochastic topology, and specifically from the study of homological connectivity for a random $k$-fold cover. For $k=1$, the critical points of $d_{\cP}$ played a key role in analyzing the last changes in the homology of the random cover $B_r(\cP)$, as $r$ is increased. Taking $\cP_n$ to be a homogeneous Poisson process on a $d$-dimensional compact manifold, with rate $n$, it was proved \cite{bobrowski_homological_2019} that passing the threshold $r = ((\log n + (i-1)\logg n)/n)^{1/d}$, the $i$-th homology of $B_r(\cP_n)$ will remain unchanged if we further increase $r$. Additionally, a functional Poisson limit was proved \cite{bobrowski2022poisson}
for the locations and radii at which the last $i$-cycles appear. Note that for $i=d$, this analysis describes the exact moment at which $B_r(\cP_n)$ covers the manifold, and the critical points of index $d$ correspond to the last uncovered connected components.
The results presented here will  play a similar role in analyzing homological connectivity  for the random $k$-fold cover $B_r^{(k)}(\cP_n)$. In particular, this will enable a detailed theoretical analysis for $k$-fold coverage problem discussed above.

We note that the Morse theoretic framework we develop here for $d_{\cP}^{(k)}$ is tightly related to the study of the order-$k$ Delaunay mosaics \cite{edelsbrunner2021step,edelsbrunner2021multi}. These simplicial complexes, denoted $\mathrm{Del}_{k}(\cP)$, generalize the Delaunay triangulation and are  analogously constructed from the order-$k$ Voronoi tessellations \cite{edelsbrunner1985voronoi}. Similarly to the alpha shapes, the authors in \cite{edelsbrunner2021multi} define a sub-complex $\mathrm{Del}_{k}(\cP,r) \subset \mathrm{Del}_{k}(\cP)$ that has the same homotopy type as the $k$-fold cover $B_r^{(k)}(\cP)$.
Thus, these sub-complexes can serve as a proxy for computing the persistent homology of the $k$-fold cover filtration. The study in \cite{edelsbrunner2021step} identifies critical configurations in  $\mathrm{Del}_{k}(\cP)$, in the sense that once the corresponding cell enters the filtration $\mathrm{Del}_{k}(\cP,r)$, it changes the Euler characteristic, and consequently the homotopy-type. What we provide here is a Morse theoretic view on such critical configurations (`steps'), showing that they in fact originate from critical points of $d_{\cP}^{(k)}$. Additionally, we are able to classify them by their index, and to provide a detailed description for the effect these critical configurations have on the homology of $\mathrm{Del}_{k}(\cP,r)$, i.e., beyond the Euler characteristic.

\section{Main Results} \label{sec:main_res}

We start by briefly reviewing the fundamental statements for Morse theory for the distance function $d_{\cP}$ \cite{bobrowski_distance_2014}, based on Morse theory for min-type function \cite{gershkovich_morse_1997}.
The assumption (here and throughout the paper) is that the points in $\cP$ are in general position. For a point $c\in \R^d$, denote $r_c := d_{\cP}(c)$, and $\cP_c^\partial  := \partial B_{r_c}(c) \cap \cP$ (where $\partial B_r(c)$ denotes the boundary of the ball).
The point $c\in\Rd$ is critical for $d_{\cP}$ if and only if $c\in \sigma(\cP_c^\partial)$ (the open simplex spanned by $ \cP_c^\partial$). The index of $c$ in this case is $\mu_c := |\cP_c^\partial|-1$. Similarly to classical Morse theory, it follows from \cite{gershkovich_morse_1997} that every such critical point of index $\mu_c = i$ either adds a new generator to the $i$-th homology of $B_r(\cP)$, or kills a generator in  the $(i-1)$-th homology.

While the function $d_{\cP}^{(k)}$ can be defined as a minimum of a finite set of functions \eqref{eq:min_max}, $d_{\cP}^{(k)}$ is not a min-type function (for $k>1$) since the minimum is over functions that are not smooth. Therefore, we switch to the more general context of piecewise smooth functions and continuous selections, developed in \cite{agrachev1997morse}. In the following when we refer to `critical points', we mean that in the sense of \cite{agrachev1997morse} (Definition 1.1).

Let $c\in \R^d$, and denote $r_c := d^{(k)}_{\cP}(c)$. Define 
\begin{equation}\label{eqn:def_sets}
\cB_c := B_{r_c}(c),\quad \cP_c := \cB_c\cap \cP,\quad
\cP_c^{\cI} := \mathrm{int}(\cB_c)\cap \cP,\quad\text{and}\quad
\cP_c^{\partial} := \partial\cB_c \cap \cP,
\end{equation}
and correspondingly,
\begin{equation}\label{eqn:def_sizes}
N_c := |\cP_c|,\quad
N^{\cI}_c := |\cP_c^{\cI}|,\quad\text{and}\quad
N^{\partial}_c := |\cP_c^{\partial}|,
\end{equation}
so that $\cP_c = \cP_c^{\cI}\cup\cP_c^{\partial}$, and $N_c = N^{\cI}_c+ N^{\partial}_c$. Note that the definition of $d_{\cP}^{(k)}$ implies that $N^{\cI}_c < k$, and since the points are in general position, we have $N^{\partial}_c\le d+1$. For examples, see Figure \ref{fig:crit_pts}.

The following theorems are the main contribution of our paper, namely the characterization of critical points and their indexes, and the changes in homology induced by critical points.

\begin{theorem}\label{thm:critical_points}
A point $c\in \R^d$ is a critical point of $d_{\cP}^{(k)}$, if and only if
$c\in\sigma(\cP_c^{\partial})$. The \emph{index} of $c$ is defined as $\mu_c:=N_c - k$. All critical points of $d_{\cP}^{(k)}$ are non-degenerate.
\end{theorem}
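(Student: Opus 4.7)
The plan is to apply the piecewise-smooth Morse theory of \cite{agrachev1997morse} to a convenient \emph{local} representation of $d_{\cP}^{(k)}$ near a candidate critical point $c$, generalizing the min-type argument of \cite{bobrowski_distance_2014}. Since $N_c^{\cI} < k$ and the points of $\cP_c^{\cI}$ lie strictly inside the ball $\cB_c$, while $\cP \setminus \cP_c$ lies strictly outside, continuity provides a neighborhood $U$ of $c$ on which these strict orderings persist. Consequently the $k$-th nearest point of $\cP$ to any $x \in U$ lies in $\cP_c^{\partial}$, and more precisely
\[
d_{\cP}^{(k)}(x) \;=\; \min_{T \subset \cP_c^{\partial},\,|T|=j}\ \max_{p \in T}\|x - p\|,\qquad j := k - N_c^{\cI}.
\]
This locally exhibits $d_{\cP}^{(k)}$ as a continuous selection of the smooth functions $x \mapsto \|x-p\|$, $p \in \cP_c^{\partial}$, placing us in the setting of \cite{agrachev1997morse}.

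\textbf{Critical point characterization.} By Definition~1.1 of \cite{agrachev1997morse}, $c$ is critical iff the zero vector lies in the convex hull of gradients of the active branches. Each $p \in \cP_c^{\partial}$ appears as an argmax in some active $T$, and the gradient of $\|x-p\|$ at $c$ is the unit vector $(c-p)/r_c$. Hence
\[
c\ \text{critical}\ \Longleftrightarrow\ 0 \in \mathrm{conv}\!\left\{(c-p)/r_c : p \in \cP_c^{\partial}\right\}\ \Longleftrightarrow\ c \in \mathrm{conv}(\cP_c^{\partial}).
\]
General position on $\cP$ makes $\cP_c^{\partial}$ affinely independent on $\partial\cB_c$ and precludes $c$ from lying on a proper face of the simplex: otherwise, an omitted point $p_0 \in \cP_c^{\partial}$ would have to lie on the circumsphere of the remaining vertices around their circumcenter $c$, a codimension-one coincidence excluded by generic positioning. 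Thus $c$ is critical iff $c \in \sigma(\cP_c^{\partial})$.

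\textbf{Index and non-degeneracy.} To verify that $\mu_c = N_c - k$ coincides with the Morse-theoretic index and that $c$ is non-degenerate, I analyze $F^2 := (d_{\cP}^{(k)})^2$. Using $\|x-p\|^2 = r_c^2 + 2\langle c-p,\,u\rangle + \|u\|^2$ for $x = c + u$ gives
\[
F^2(c+u) - r_c^2 \;=\; \|u\|^2 + L_{(j)}(u),
\]
where $L_{(j)}(u)$ is the $j$-th smallest value among the linear forms $L_p(u) := 2\langle c-p,\,u\rangle$ for $p \in \cP_c^{\partial}$. Decomposing $\R^d = V \oplus V^\perp$ with $V := \mathrm{span}(\cP_c^{\partial} - c)$, the $V^\perp$ directions contribute only the strictly positive quadratic $\|u\|^2$ and are ascending. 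Within $V$, the descent cone is $\{u : L_{(j)}(u) < 0\}$, consisting of directions lying in at least $j$ of the half-spaces $H_p^+ := \{\langle p-c,u\rangle>0\}$. The condition $c \in \sigma(\cP_c^{\partial})$ forces $\{p-c\}_{p \in \cP_c^{\partial}}$ to positively span $V$, and I would then show that the descent link on a small sphere around $c$ has the homotopy type of $S^{N_c^{\partial}-j-1} = S^{N_c-k-1}$, yielding a cell of dimension $N_c-k$ via the Morse lemma of \cite{agrachev1997morse}. Non-degeneracy follows from the transverse positive-definite contribution of $\|u\|^2$, together with general position ensuring no accidental ties among the $L_p$ besides those forced by the min-max.

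\textbf{Main obstacle.} The principal technical step is identifying the homotopy type of the descent link of the order-statistic function $L_{(j)}$. For $j=1$ this reduces to the classical distance-function calculation of \cite{bobrowski_distance_2014}, where the link is a union of open half-spaces homotopy equivalent to $S^{N_c^{\partial}-2}$. For general $j$ the link is a more intricate union of open convex sectors $\bigcap_{p \in T} H_p^+$, each contractible individually, whose combinatorics are controlled by $c \in \sigma(\cP_c^{\partial})$; I would approach the homotopy identification via a Nerve-Lemma argument exploiting convexity of the sectors, or via an Alexander-duality computation on the sphere that pairs the descent and ascent links.
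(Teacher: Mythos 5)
Your treatment of criticality and non-degeneracy follows essentially the paper's route: localize $d_{\cP}^{(k)}$ as a continuous selection of the smooth distance branches indexed by $\cP_c^{\partial}$ (the paper's Lemma \ref{lem:cs_ball} and Corollary \ref{cor:local_cs}), identify the Clarke subdifferential with the convex hull of the active gradients, and read off the criticality condition $c\in\sigma(\cP_c^{\partial})$; your handling of the open-versus-closed simplex via general position is actually slightly more careful than the paper's. Two small remarks: (i) the paper works with the \emph{squared} distance so that each branch has Hessian $2I_{d\times d}$, making the verification of both conditions of Definition \ref{def:nondegen} immediate (and showing the quadratic index is $0$); your unsquared branches work too, but your non-degeneracy claim should be an explicit check of those two conditions rather than an appeal to "no accidental ties". (ii) The step "each $p\in\cP_c^{\partial}$ appears as an argmax in some active $T$" (i.e., every branch is essential, so $I(c)$ is all of $\cP_c^{\partial}$) does need an argument — it is what makes the "if" direction work, since $c\in\sigma(\cP_c^{\partial})$ does not lie in the convex hull of gradients of any proper subset; the paper proves this in the second half of Lemma \ref{lem:cs_ball}.

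The genuine error is in your index paragraph. The descent link of $L_{(j)}$ is \emph{not} homotopy equivalent to the single sphere $S^{N_c-k-1}$ except in the extreme cases $\mu_c=0$ or $\mu_c=N_c^{\partial}-1$ (the latter containing the classical $k=1$ case, which is why the analogy with \cite{bobrowski_distance_2014} misleads here). The set $\{u: L_{(j)}(u)<0\}$ is the union of the convex sectors $\bigcap_{p\in T}H_p^+$ over $j$-subsets $T$, and since $c\in\sigma(\cP_c^{\partial})$ a collection of such sectors has nonempty common intersection exactly when the union of the corresponding $T$'s omits some point of $\cP_c^{\partial}$; the nerve is therefore the $(\mu_c-1)$-skeleton of the $(N_c^{\partial}-1)$-simplex, a wedge of $\Delta_c=\binom{N_c^{\partial}-1}{\mu_c}$ spheres of dimension $\mu_c-1$ (this is exactly the paper's auxiliary-complex computation in Lemma \ref{lem:Uc_open}, and the source of the multiplicity $\Delta_c$ in Theorem \ref{thm:homology_effect}). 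For instance, $N_c^{\partial}=3$, $j=2$ gives three pairwise "non-adjacent" sectors, i.e., three components, not the two components of $S^0$. So no single $\mu_c$-cell is attached, and $\mu_c$ cannot be recovered as the dimension of a sphere in the descent link. This does not invalidate Theorem \ref{thm:critical_points} itself — there $\mu_c:=N_c-k$ is a \emph{definition}, and the actual claims (criticality and non-degeneracy) are the ones you prove correctly — but your proposed verification would fail as stated, and the correct homotopy type is needed downstream for Theorem \ref{thm:homology_effect}.
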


Since $N_c^\partial \le d+1$, and $N_c^{\cI} \le k-1$, we have $\mu_c \le d$, as expected. Additionally, in the special case $\mu_c = d$, we have only one option --   $N_c^\partial=d+1$, and $N_c^{\cI} = k-1$.
Finally, note that for $k=1$, the characterization in Theorem \ref{thm:critical_points} coincides with that of the distance function $d_{\cP}$ discussed above. For examples of critical points of $d_{\cP}^{(2)}$, see Figure \ref{fig:crit_pts}. 

\begin{figure}[ht]
     \centering
     \begin{subfigure}[b]{0.45\textwidth}
         \centering
         \includegraphics[width=0.7\textwidth]{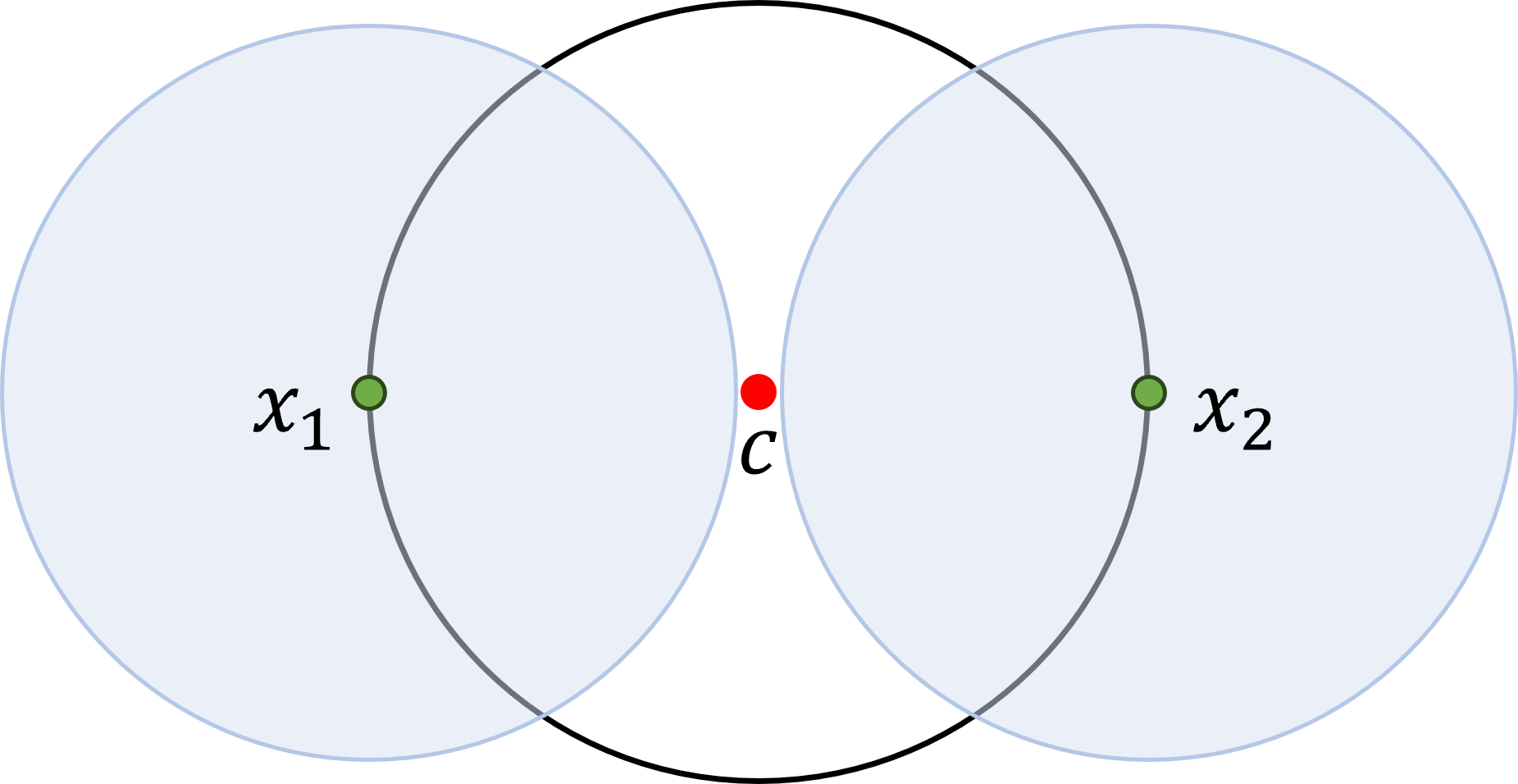}
     \end{subfigure}
     \hfill
     \begin{subfigure}[b]{0.45\textwidth}
         \centering
         \includegraphics[width=0.7\textwidth]{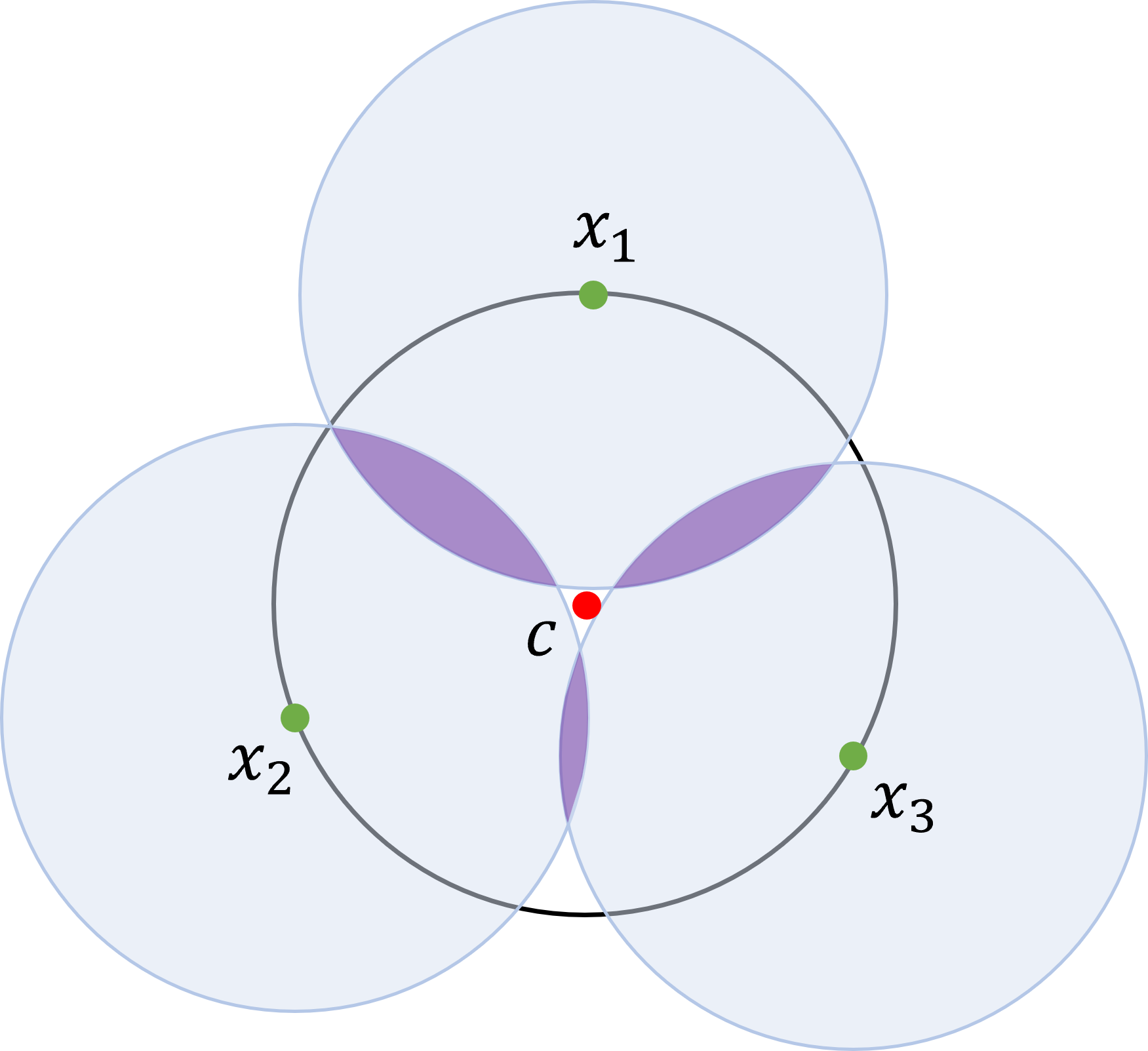}
         \label{fig:three sin x}
     \end{subfigure}
     \vfill\vfill
     \begin{subfigure}[b]{0.45\textwidth}
         \centering
         \includegraphics[width=0.7\textwidth]{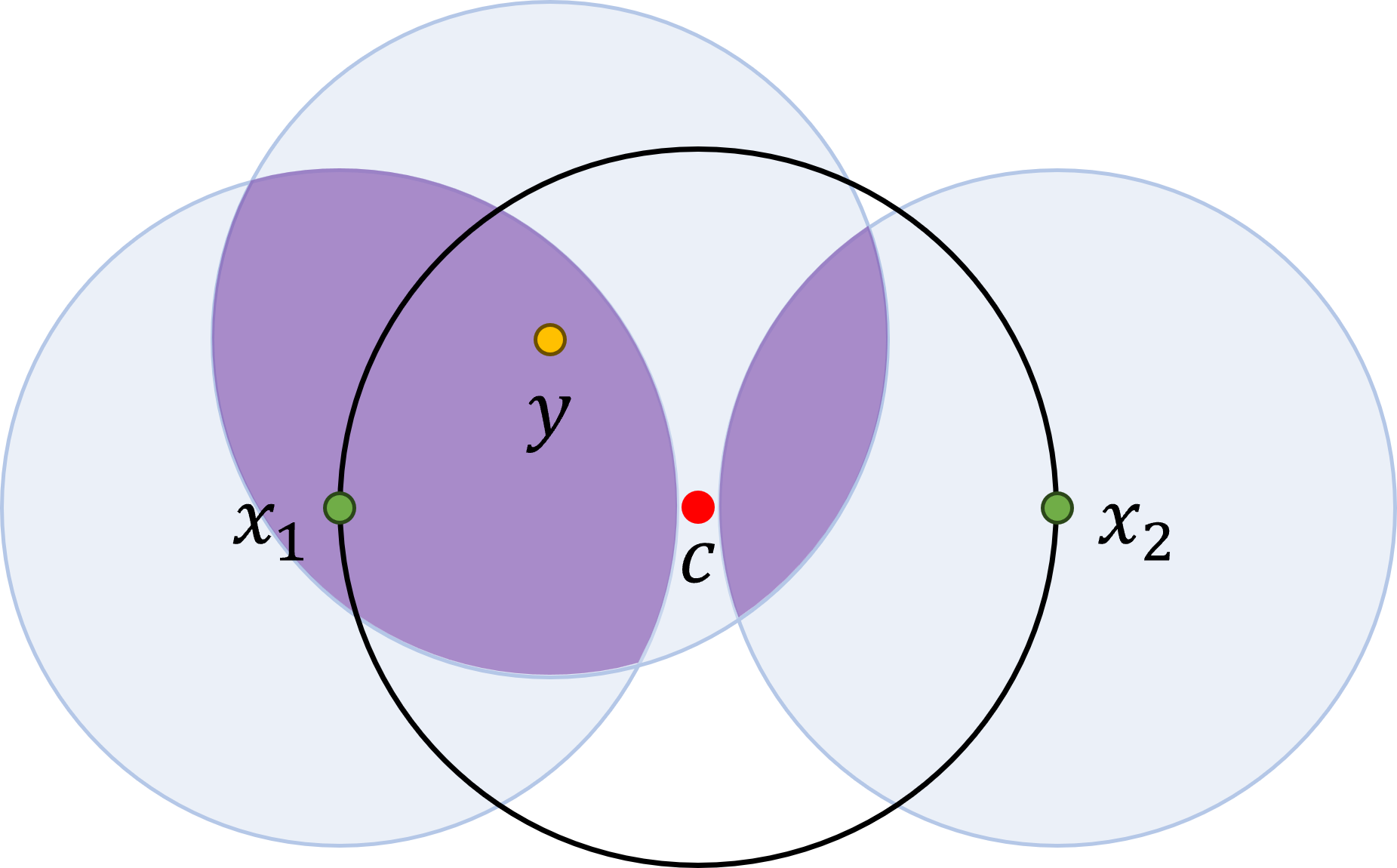}
         \label{fig:five over x}
     \end{subfigure}
     \hfill
     \begin{subfigure}[b]{0.45\textwidth}
         \centering   \includegraphics[width=0.7\textwidth]{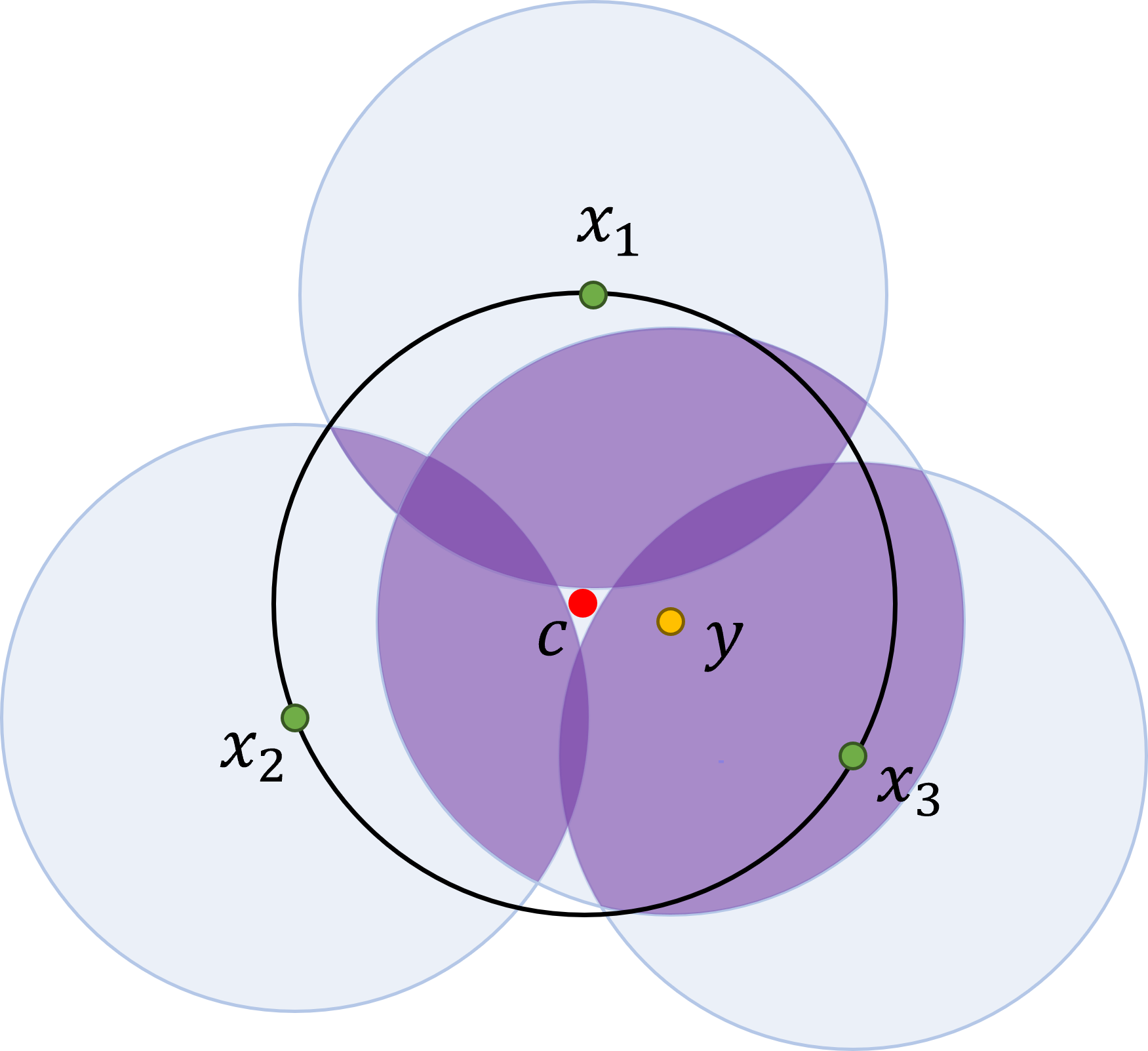}
     \end{subfigure}
    \caption{
    Critical points of $d_{\cP}^{(k)}$ in $\R^2$, for $k=2$. The points $x_1,x_2,x_3$, and $y$ are in $\cP$, and the point $c$ represents the critical point. 
    {\bf Top left:} $\cP_c^{\partial
}=\{x_1,x_2\}$, $\cP_c^{\cI}=\emptyset$, and $\mu_c=0$. This critical point adds a new generator to $H_0$ (new component).
    {\bf Bottom left:} $\cP_c^{\partial
}=\{x_1,x_2\}$, $\cP_c^{\cI}=\{y\}$, and $\mu_c=1$. This critical point kills a generator in $H_0$ (components merge).
    {\bf Top right:} $\cP_c^{\partial
}=\{x_1,x_2,x_3\}$, $\cP_c^{\cI}=\emptyset$, and $\mu_c=1$. This critical point kills two generators in $H_0$ (three components merge into one).
    {\bf Bottom right:} $\cP_c^{\partial
}=\{x_1,x_2,x_3\}$, $\cP_c^{\cI}=\{y\}$, and $\mu_c=2$. This critical point kills an existing $1$-cycle.
    }
    \label{fig:crit_pts}
\end{figure}

The next theorem summarizes the effects of critical points  on the homology of the $k$-fold cover $B_{r}^{(k)}(\cP)$. We consider homology with coefficients in a field $\mathbb{F}$. 
We will assume from here onward that $d_{\cP}^{(k)}$ is a Morse function, in the sense that the critical levels are distinct. While it is easy to find examples where this is not the case, our motivation is the case where $\cP$ is random. For random point-sets, the probability to have two critical points with the same critical value is zero.

\begin{theorem}\label{thm:homology_effect}
Let $c\in\Rd$ be a non-degenerate critical point of $d_{\cP}^{(k)}$ of index $\mu_c$. Let $\epsilon>0$ such that the interval $[r_c-\epsilon,r_c+\epsilon]$ contains a single critical value (namely, $r_c$). Denote $B_r := B_r^{(k)}(\cP)$  and $\Delta_c := \binom{N^{\partial}_c-1}{\mu_c}$.  
Then for $i=\mu_c$, we have
\[
H_{i}(B_{r_c+\eps}) \cong H_{i}(B_{r_c-\eps})\oplus \mathbb{F}^{\Delta_c^+},\quad\text{and}\quad H_{i-1}(B_{r_c-\eps}) \cong H_{i-1}(B_{r_c+\eps})\oplus \mathbb{F}^{\Delta_c^-},
\]where $\Delta_c^+,\Delta_c^-$ are positive integers such that $\Delta_c^+ + \Delta_c^- = \Delta_c$.
If  $i\ne \mu_c,\mu_c-1$, then
\[
H_{i}(B_{r_c+\eps})\cong H_{i}(B_{r_c-\eps}).
\]

\end{theorem}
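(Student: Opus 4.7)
The plan is to localize the function near the critical point $c$, identify the local sub-level-set change using the piecewise-smooth Morse framework of \cite{agrachev1997morse}, compute the local relative homology, and conclude via the long exact sequence of the pair $(B_{r_c+\eps}, B_{r_c-\eps})$.

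First, I would show that in a small neighborhood $U$ of $c$ and for $r$ close to $r_c$, the function $d_{\cP}^{(k)}$ depends only on the boundary points $\cP_c^{\partial}$. Set $N := N_c^{\partial}$ and $m := k - N_c^{\cI}$, so that $\mu_c = N-m$ and $\Delta_c = \binom{N-1}{\mu_c} = \binom{N-1}{m-1}$. Since the points of $\cP_c^{\cI}$ remain strictly inside $B_r(x)$ for $x\in U$ and $r$ near $r_c$, the function $d_{\cP}^{(k)}(x)$ locally equals the $m$-th smallest of the smooth distances $f_i(x):=\|x-p_i\|$, $p_i\in \cP_c^{\partial}$. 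Consequently,
\[
B_r\cap U \;=\; \{x\in U : |\{i : f_i(x)\le r\}|\ge m\},
\]
i.e., locally the sub-level set coincides with the $m$-fold cover of the $N$ radius-$r$ balls centered at $\cP_c^{\partial}$. The condition $c\in\sigma(\cP_c^{\partial})$ from Theorem~\ref{thm:critical_points} places $0$ in the interior of the convex hull of the gradients $\nabla f_i(c)$, giving a non-degenerate critical point of the $m$-th order statistic in the sense of \cite{agrachev1997morse}.

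Second, I would compute $H_*(B_{r_c+\eps}\cap U,\, B_{r_c-\eps}\cap U)$ directly from this local model. Geometrically, for $r$ just below $r_c$ the local $m$-fold cover is a disjoint union of $\binom{N}{m}$ convex intersection regions (one per $m$-subset of $\cP_c^{\partial}$), while for $r$ just above $r_c$ these regions all connect through the non-empty ``hub'' $\bigcap_{i=1}^{N}B_r(p_i)$ containing $c$. I would show that this transition is homotopy-equivalent to attaching a wedge of $\binom{N-1}{\mu_c}$ cells of dimension $\mu_c$, either via a Mayer--Vietoris argument on the nerve of the $m$-fold cover, or by using the homotopy equivalence of $B_r^{(k)}(\cP)$ with $\mathrm{Del}_k(\cP,r)$ from \cite{edelsbrunner2021multi} and counting the $\mu_c$-cells of the order-$k$ Delaunay mosaic that enter the filtration at level $r_c$. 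Excision then transfers the computation to the global pair, yielding $H_{\mu_c}(B_{r_c+\eps}, B_{r_c-\eps}) \cong \mathbb{F}^{\Delta_c}$ and $H_i(B_{r_c+\eps}, B_{r_c-\eps})=0$ for $i\neq\mu_c$.

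Third, I would feed this into the long exact sequence of the pair $(B_{r_c+\eps},B_{r_c-\eps})$. Vanishing of the relative homology in degrees $i\notin\{\mu_c,\mu_c-1\}$ immediately gives $H_i(B_{r_c+\eps})\cong H_i(B_{r_c-\eps})$ in those degrees. In the remaining three-term exact sequence
\[
0 \to H_{\mu_c}(B_{r_c-\eps}) \to H_{\mu_c}(B_{r_c+\eps}) \to \mathbb{F}^{\Delta_c} \to H_{\mu_c-1}(B_{r_c-\eps}) \to H_{\mu_c-1}(B_{r_c+\eps}) \to 0,
\]
working over a field ensures splittings $H_{\mu_c}(B_{r_c+\eps})\cong H_{\mu_c}(B_{r_c-\eps})\oplus \mathbb{F}^{\Delta_c^+}$ and $H_{\mu_c-1}(B_{r_c-\eps})\cong H_{\mu_c-1}(B_{r_c+\eps})\oplus \mathbb{F}^{\Delta_c^-}$, with $\Delta_c^++\Delta_c^-=\Delta_c$ by exactness.

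The main obstacle is the second step: establishing that the local relative homology has rank exactly $\binom{N-1}{\mu_c}$ and is concentrated in degree $\mu_c$. The $m$-fold cover of $N$ balls with cospherical centers can have a delicate combinatorial structure, and naively counting the $\binom{N}{m}$ intersection components overshoots by a factor of $N/m$. The cleanest route seems to be to model the transition through the specific $\mu_c$-cells added to $\mathrm{Del}_k(\cP,r)$ at $r_c$, using the identification of critical configurations in \cite{edelsbrunner2021step}, and then verify that these cells furnish a basis of the local relative $\mu_c$-homology; a direct inductive Mayer--Vietoris computation on the sub-collection $\{\bigcap_{i\in S}B_r(p_i):|S|\ge m\}$, exploiting the fact that all intersections of size $\ge m+1$ are simultaneously nonempty for $r>r_c$, should give the same count.
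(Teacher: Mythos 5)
Your overall architecture --- localize near $c$, compute the local relative homology, excise, and run the long exact sequence of the pair --- is exactly the paper's strategy: the localization is Lemma \ref{lem:cs_ball}/Corollary \ref{cor:local_cs}, and the excision and exact-sequence steps appear essentially verbatim in the paper's proof, together with an appeal to Proposition 2.1 of \cite{agrachev1997morse} to replace $B_{r_c}^{\circ}$ and $B_{r_c}$ by $B_{r_c-\eps}$ and $B_{r_c+\eps}$. Your third step is therefore fine as written.

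The genuine gap is your second step, and while you correctly flag it as the main obstacle, the geometric picture you offer for it is not merely incomplete but wrong for $\mu_c\ge 2$. For $r$ slightly below $r_c$ the local $m$-fold cover is \emph{not} a disjoint union of $\binom{N}{m}$ convex regions: two $m$-wise intersections $\bigcap_{i\in S}B_r(p_i)$ and $\bigcap_{i\in S'}B_r(p_i)$ overlap near $c$ whenever $|S\cup S'|\le N-1$, since only the full $N$-wise intersection becomes empty below the critical level (this is precisely what makes $c$ critical). Disjointness holds only when $m=N-1$, i.e.\ $\mu_c=1$. In the bottom-right example of Figure \ref{fig:crit_pts} ($N=3$, $m=1$, $\mu_c=2$) the local sub-level set is a connected annulus, not three disjoint lens-shaped pieces, and the relevant homology of $U_c^\circ$ lives in degree $\mu_c-1$, not in degree $0$; so the discrepancy you noticed is not a matter of overcounting components. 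The paper closes this step (Lemma \ref{lem:Uc_open} and Proposition \ref{prop:relative}) via the normal form \eqref{eqn:local_approx} of \cite{agrachev1997morse}: because the quadratic index vanishes, $U_c^\circ$ is homotopy equivalent to the auxiliary complex $K_c$, which for $\delta_{\cP}^{(k)}$ is the $(\mu_c-1)$-skeleton of the $(N_c^\partial-1)$-simplex; its reduced homology is $\mathbb{F}^{\binom{N_c^\partial-1}{\mu_c}}$ concentrated in degree $\mu_c-1$, and Theorem \ref{thm:agrachev_4.2} then yields $H_i(U_c^\bullet,U_c^\circ)\cong\mathbb{F}^{\Delta_c}$ for $i=\mu_c$ and $0$ otherwise. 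Your proposed alternatives (Mayer--Vietoris on the nerve of the local $m$-fold cover, or counting the order-$k$ Delaunay cells entering at $r_c$) could plausibly reproduce this skeleton-of-a-simplex computation, but as written the rank $\binom{N-1}{\mu_c}$ and the concentration in degree $\mu_c$ are asserted rather than proved, and this is the entire quantitative content of the theorem.
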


Note that, this results generalizes the behavior known in classical Morse theory (and for the distance function $d_\cP$), where $\Delta_c = 1$. As in classical Morse theory, finding the values of $\Delta_c^+$ and $\Delta_c^-$ requires additional geometric analysis (see Figure \ref{fig:homology_effect}).
For the highest index ($\mu_c=d$), we have that $\Delta_c = 1$ in the $k$-NN distance as well.

\begin{figure}
    \centering
    \includegraphics[width =0.45\textwidth]{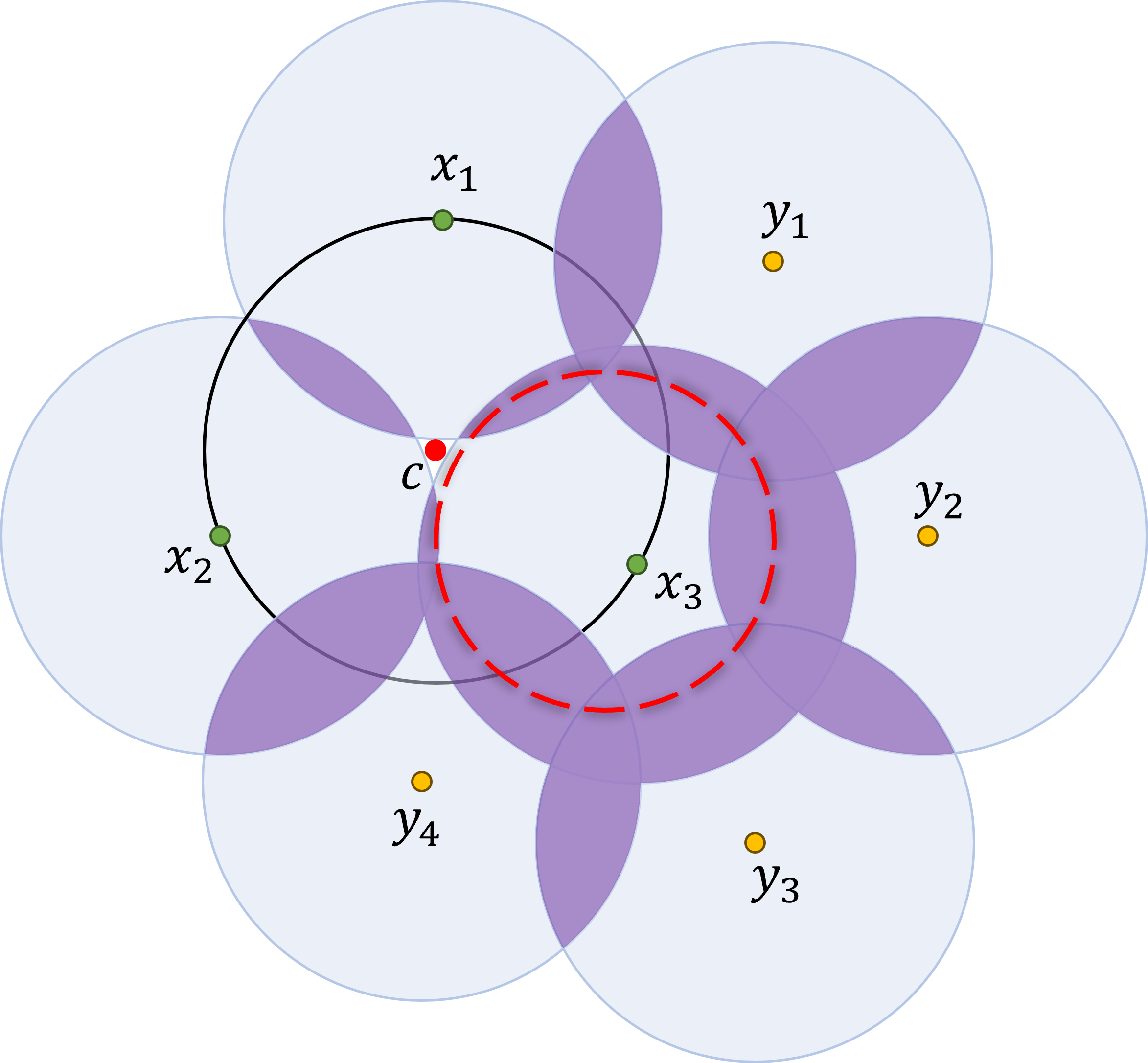}
    \caption{The effect of a critical point on the homology. The point $c\in\R^2$ is a critical point of $d_{\cP}^{(2)}$ of index $\mu_c=1$, where $\cP=\{x_1,x_2,x_3,y_1,\ldots,y_4\}$. In this case, we have $N_c^\partial=3$, and therefore, $\Delta_c = \binom{2}{1} = 2$. Indeed, we observe exactly two changes in the homology of the sub-level sets (purple shaded regions), once $c$ is reached. One change is the generation of a new $1$-cycle on the right side (the red dashed cycle). Another change is the elimination of the connected component ($0$-cycle) on the left side.}
    \label{fig:homology_effect}
\end{figure}

\begin{remark}
While Theorem \ref{thm:homology_effect} provides the total number of changes in the homology, it does not indicate whether these changes are positive (creation of a cycle) or negative (elimination of a cycle). This follows from the nature of Morse theory which is local, while the exact changes are associated with global properties of the ball cover.
One way to identify the exact changes, is via the persistent homology \cite{edelsbrunner2008persistent,zomorodian_computing_2005} of the $k$-Delaunay complex \cite{edelsbrunner2019poisson}.
\end{remark}

\begin{remark}
    The characterization of critical points via Theorem \ref{thm:critical_points} coincides with the notion of `critical steps' in \cite{edelsbrunner2021step}. 
    Thus, an immediate conclusion is that there is a one-to-one correspondence between critical points of $d_{\cP}^{(k)}$ and the critical steps in the order-$k$ Delaunay filtration.
    The approach in \cite{edelsbrunner2021step} was to examine the combinatorial structure of the order-$k$ Delaunay mosaic, and track changes in the Euler characteristic. The Morse-theoretic approach allows us to obtain the detailed description for homology presented in Theorem \ref{thm:homology_effect}.
\end{remark}


\section{Morse theory for piecewise smooth functions}
\label{sec:morse_pw}

We briefly review  the relevant statements from \cite{agrachev1997morse} for piecewise smooth functions. For simplicity, we focus on the case of functions in $\R^d$. 

The definition of critical points in \cite{agrachev1997morse} relies on the notion of the Clarke subdifferential \cite{clarke1975generalized}. Let $f:\R^d\rightarrow\R$ be Lipschitz near a point $x_0\in \R^d$.
The Clarke generalized derivative  at $x_0$ in the direction $v\in \R^d$, is defined as 
\[
        f^{o}(x_0;v) := \limsup_{
        \substack{x\rightarrow x_0 \\ \alpha\rightarrow 0 \\ \alpha>0}
        }
        \frac{f(x + \alpha v) - f(x)}{\alpha}.
\]
The \emph{Clarke subdifferential} of $f$ at $x_0$, denoted by $\partial f(x_0)$, is defined as
\[
    \partial f(x_0) :=\{\xi\in \R^d: f^{o}(x_0;v)\geq \langle\xi, v\rangle\text{ for all }v\in \R^d\}.
\]
This Clarke subdifferential allows us to define critical points for locally-Lipschitz functions.
\begin{definition}
[Definition 1.1 in \cite{agrachev1997morse}]\label{def:crit_pts} Let $f:\R^d\to\R$ be locally Lipschitz. 
    A point $c\in\R^d$ is called critical if ${\bs 0}\in \partial f(c)$.
\end{definition}

Let  $f_1,\ldots, f_m:\R^d\to\R$ be a collection of continuous functions. A  function $f:\R^d\to\R$ is called a \emph{continuous selection} of $f_1,\ldots,f_m$, if for every $x\in\R^d$ we have $f(x) = f_i(x)$ for some $1\le i\le m$.
For every $x$, define
\begin{equation}\label{eqn:I}
    I(x) := \{i: x\in A_i\},\quad A_i := \cl\left(\Int(\set{x: f(x) = f_i(x)})\right),
\end{equation}
where $\cl(\cdot)$ and $\Int(\cdot)$ stand for the closure and interior, respectively.
In the case where $f_1,\ldots,f_m$ are all $C^1$, then $f$ is locally Lipschitz, and its Clarke subdifferential is given by
\begin{equation}\label{eqn:clarke_subdiff}
\partial f(x) = \mathrm{conv}\set{\nabla f_i(x) : i \in I(x)},
\end{equation}
where $\mathrm{conv}$ stands of the convex hull. 
This representation allows us to define non-degenerate critical points for continuous selections.

For fixed $x_0\in \R^d$ and ${\bs{\lambda}} \in \R^{|I(x_0)|}$,  define
\[
L_{\bs{\lambda}}(x):=\sum_{i\in {I}(x_0)}\lambda_i f_{i}(x),\quad\text{and}\quad
{T}(x_0):=\bigcap_{i\in {I}(x_0)} \ker\left(\nabla f_{i}(x_0)\right).
\]

\begin{definition}[Definition 2.2 in \cite{agrachev1997morse}] \label{def:nondegen}
A critical point $c\in\Rd$ is called non-degenerate if the following conditions hold:
\begin{enumerate}
    \item For each $i\in {I}(c)$, the set of gradients $\{\nabla f_j(c):j\in {I}(c){\setminus}\{i\}\}$ is linearly independent,
    
    \item The Hessian of $L_{{\bs{\lambda}}^*}$ at $c$, denoted $H_{{\bs{\lambda}}^*}(c)$, is invertible on $T(c)$ , where ${\bs{\lambda}}^*$ satisfies 
    \[
    \nabla L_{{\bs{\lambda}}^*}(c)=\bs{0},\quad\sum_{i\in {I}(c)}\lambda^*_i = 1,\quad\text{and}\quad\lambda^*_i\geq 0,\text{ for every }i\in {I}(c).
    \]
\end{enumerate}
\end{definition}
Note that \eqref{eqn:clarke_subdiff} guarantees that $\bs{\lambda}^*$ exists, since $\bs{0}$ can be represented as a convex combination of $\{\nabla f_j(c):j\in {I}(c)\}$, and the first condition in Definition \ref{def:nondegen} guarantees that it is unique.
The \emph{quadratic index} $\tilde\mu_c$ is defined as the dimension of the maximal linear subspace of ${T}(c)$ on which $H_{{\bs{\lambda}}^*}(c)$ is negative definite.

According to \cite{agrachev1997morse} (Theorem 2.3), for every non-degenerate point $c$, there exists a neighborhood $U_c$, where $f$ is locally topologically equivalent to a function $g:\R^d\to\R$ of the form
\begin{equation}\label{eqn:local_approx}
g(y) = f(c) + \ell(y_1,\ldots, y_q) - \sum_{j=q+1}^{q+\tilde\mu_c}y_j^2 + \sum_{j=q+\tilde\mu_c+1}^d y_j^2,\quad y = (y_1,\ldots,y_d)\in\tilde U_{\bs 0},
\end{equation}
where $q = |I(c)|-1$, $\ell(y_1,\ldots,y_q)$ is a continuous selection of $\{y_1,\ldots,y_q,-\sum_{j=1}^q y_j\}$,  $\tilde\mu_c$ is the quadratic index, and $\tilde U_{\bs 0}$ is some neighborhood of $\bs 0$.

Next, we define 
\begin{equation}\label{eqn:local_sets}
        U_{c}^{\circ} := \{x\in U_c :  f(x) < f(c)\},
        \quad\text{and}\quad
        U_{c}^{\bullet} := \{x\in U_c: f(x)\leq f(c)\}.
\end{equation}
The following theorem presents the effect of a critical point $c$ on the relative homology.

\begin{theorem}[Theorem 4.2 in \cite{agrachev1997morse}]\label{thm:agrachev_4.2}  
Let $f:\Rd\rightarrow\R$ be locally Lipschitz, and let $c\in\Rd$ be a non-degenerate critical point of $f$. Then,
\begin{enumerate}
\item If $c$ is a local minimum ($U_{c}^{\circ} = \emptyset$), then
    \[
    H_i(U_{c}^{\bullet}, U_{c}^{\circ}) \cong 
    H_i(U_{c}^{\bullet}) \cong
    \begin{cases}
        \mathbb{F} & i=0, \\
        0 & i>0.
    \end{cases}
    \]

    \item If $c$ is \textbf{not} a local minimum, ($U_{c}^{\circ}\neq\emptyset$), then
    \[
    H_i(U_{c}^{\bullet}, U_{c}^{\circ}) \cong  
    \begin{cases}
        H_{i-1}(U_{c}^{\circ}) & i\geq 2, \\
        \mathbb{F}^{\alpha-1} & i=1, \\
        0, & i=0,
    \end{cases}
    \]
\end{enumerate}
where $\alpha$ is the number of connected components of the set $U_{c}^{\circ}$ (with $\mathbb{F}^0\equiv 0$).
\end{theorem}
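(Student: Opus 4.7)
The plan is to invoke the local normal form in equation \eqref{eqn:local_approx}, which gives a topological equivalence of $f$ near $c$ with the explicit model
\[
g(y) - f(c) = \ell(y') + \|y''_+\|^2 - \|y''_-\|^2,
\]
where $y' \in \R^q$, $y''_+ \in \R^{d-q-\tilde\mu_c}$, $y''_- \in \R^{\tilde\mu_c}$, and $\ell$ is a continuous selection of the linear forms $\{y_1,\ldots,y_q,-\sum_{i=1}^q y_i\}$. Setting $G^\bullet := \{g \le f(c)\} \cap \tilde U_{\bs 0}$ and $G^\circ := \{g < f(c)\} \cap \tilde U_{\bs 0}$, topological equivalence preserves the relative homology groups in the statement, so it suffices to compute $H_*(G^\bullet, G^\circ)$ for the model.

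The core claim is that $G^\bullet$ is contractible to the origin. I would prove this via a three-stage homotopy. Stage~(i) linearly shrinks $y''_+$ to zero: the value becomes $\ell(y') + (1-s)^2\|y''_+\|^2 - \|y''_-\|^2$, which is bounded above by the original value and hence stays $\le 0$. Stage~(ii) starts from $(y',0, y''_-)$ satisfying $\ell(y') \le \|y''_-\|^2$ and linearly shrinks $y'$; by positive homogeneity of $\ell$ (which holds because the selection cells of $\ell$ are cones at the origin in the normal form), the value equals $(1-s)\ell(y') - \|y''_-\|^2 \le \max(0,\ell(y')) - \|y''_-\|^2 \le 0$. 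Stage~(iii) linearly shrinks $y''_-$, keeping $-\|(1-s)y''_-\|^2 \le 0$. The three stages compose to a deformation retraction of $G^\bullet$ onto the origin.

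With $G^\bullet$ contractible, Case~1 of the theorem follows immediately: when $G^\circ = \emptyset$, we have $H_0(U_c^\bullet) \cong \mathbb{F}$ and $H_i(U_c^\bullet) = 0$ for $i > 0$. For Case~2, the long exact sequence of the pair collapses to $H_i(G^\bullet, G^\circ) \cong \tilde H_{i-1}(G^\circ)$ for $i \ge 1$, with $H_0(G^\bullet, G^\circ) = 0$. For $i \ge 2$ this is precisely $H_{i-1}(U_c^\circ)$ under the topological equivalence; for $i = 1$, one has $\tilde H_0(G^\circ) \cong \mathbb{F}^{\alpha - 1}$ where $\alpha = |\pi_0(U_c^\circ)|$ by definition.

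The hard part will be the contractibility step, and specifically stage~(ii): it relies on positive homogeneity $\ell(ty') = t\ell(y')$ for $t \in [0,1]$, which in turn requires the selection cells $A_i$ of \eqref{eqn:I} to be cones at the origin. This is guaranteed for the normal form constructed in \cite{agrachev1997morse} via the non-degeneracy conditions of Definition~\ref{def:nondegen}, but it is the one delicate piecewise ingredient; once $G^\bullet$ is known to be contractible, the remainder of the argument is a routine long exact sequence calculation combined with the definition of $\alpha$.
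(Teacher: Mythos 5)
The paper does not actually prove this statement: it is imported verbatim as Theorem 4.2 of \cite{agrachev1997morse}, so there is no internal proof to compare against. Your argument is a genuine, essentially self-contained derivation from the normal form \eqref{eqn:local_approx}, and as far as I can check it is correct. The three-stage retraction does establish that the model set $G^{\bullet}$ deformation retracts to the origin: stage (i) only decreases the positive quadratic term; stage (ii) works because $(1-s)\ell(y') \le \max(0,\ell(y')) \le \|y''_-\|^2$; stage (iii) is immediate. From contractibility of $U_c^{\bullet}$ the long exact sequence of the pair gives exactly the two cases of the theorem, including $H_1 \cong \mathbb{F}^{\alpha-1}$ as the kernel of the surjection $H_0(U_c^{\circ}) \to H_0(U_c^{\bullet}) \cong \mathbb{F}$.

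Two points deserve slightly more care than you give them. First, the positive homogeneity of $\ell$ that stage (ii) needs is not an extra hypothesis to be taken on faith from \cite{agrachev1997morse}: it follows directly because the pairwise agreement sets of the linear forms $\{y_1,\dots,y_q,-\sum_l y_l\}$ are hyperplanes through the origin, so the selection is locally constant on the open cone components of their complement (the sets $A_i$ of \eqref{eqn:I} are closed cones), and hence $\ell(ty')=t\ell(y')$ for $t\ge 0$ by continuity. You gesture at this but it is worth one explicit sentence. Second, the retraction must remain inside $\tilde U_{\bs 0}$; since each stage only shrinks a block of coordinates toward $0$ and fixes the rest, this holds once $\tilde U_{\bs 0}$ is taken to be a ball (or any coordinate-star-shaped neighborhood), and the relative homology of the pair is independent of that choice. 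With those two remarks added, your proof stands, and it is arguably more elementary than the route in \cite{agrachev1997morse}, which derives such statements through the auxiliary-complex machinery that the present paper only invokes later, in the proof of Lemma \ref{lem:Uc_open}.
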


\section{Critical points for the \headermath{k}-NN distance function}
\label{sec:crit_pts}

Our goal in this section is to prove Theorem \ref{thm:critical_points}.

To simplify some of the calculations, we will prove Theorem \ref{thm:critical_points} for the \emph{squared} $k$-NN distance, denoted $\delta_{\cP}^{(k)} := (d_{\cP}^{(k)})^2$. Any conclusion we make using Morse theory for  $(\delta_{\cP}^{(k)})^{-1}((-\infty,t])$ can be immediately translated to an equivalent statement about $(d_{\cP}^{(k)})^{-1}((-\infty,\sqrt{t}])$. We will therefore consider every critical point of $\delta_{\cP}^{(k)}$ as a critical point of $d_{\cP}^{(k)}$.

Note that to prove Theorem \ref{thm:critical_points} we have to show that the point in question is (a) critical, and (b) non-degenerate. We start with criticality.

\begin{lemma}\label{lem:crit}
A point $c\in\Rd$ is a critical point of $\delta_{\cP}^{(k)}$ if and only if $c\in\sigma(\cP_c^{\partial})$, where $\cP_c^{\partial}$ was defined in \eqref{eqn:def_sets}.
\end{lemma}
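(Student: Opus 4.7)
The plan is to read off the critical points of $\delta_{\cP}^{(k)}$ from its Clarke subdifferential (Definition \ref{def:crit_pts}) by recognising it locally as a continuous selection of the smooth functions $g_p(x) := \|x-p\|^2$, for which formula \eqref{eqn:clarke_subdiff} applies directly.

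First, I will show that on a sufficiently small neighborhood $U_c$ of $c$, the $k$-th nearest neighbor of any $x \in U_c$ always lies in $\cP_c^\partial$, so that $\delta_{\cP}^{(k)}|_{U_c}$ is a continuous selection of the $C^\infty$ family $\{g_p : p \in \cP_c^\partial\}$. This is immediate from the definitions: since $N_c^{\cI} < k$, under small perturbations of $x$ the interior points of $\cB_c$ remain strictly closer than $r_c$ and occupy ranks $1,\ldots,N_c^{\cI}$, the points outside $\cB_c$ remain strictly farther, and so the $k$-th nearest must always sit at rank $k - N_c^{\cI}$ among $\cP_c^\partial$.

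Second, I will identify the active-index set $I(c) = \cP_c^\partial$ from \eqref{eqn:I}. For a unit direction $v$, the first-order expansion $\|x-p_j\|^2 - \|x-p_i\|^2 \approx 2\eps\langle v, p_i - p_j\rangle$ at $x = c + \eps v$ shows that the ordering of the boundary points by distance to $x$ is governed by the signs of these inner products. By general position of $\cP_c^\partial$, every strict total order on the boundary points is realised by some direction $v$; in particular, for every $p \in \cP_c^\partial$ one can arrange for $p$ to be the $(k - N_c^{\cI})$-th closest boundary point, hence the $k$-th nearest overall, throughout a full-dimensional open cone whose closure contains $c$. This places $p \in I(c)$.

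Combining these ingredients, \eqref{eqn:clarke_subdiff} gives $\partial \delta_{\cP}^{(k)}(c) = \mathrm{conv}\{2(c-p) : p \in \cP_c^\partial\}$, so $\bs 0 \in \partial \delta_{\cP}^{(k)}(c)$ if and only if $c$ lies in the closed convex hull of $\cP_c^\partial$. One implication of the lemma is then immediate; for the converse I invoke general position to exclude the degenerate possibility that the circumcentre $c$ of the simplex spanned by $\cP_c^\partial$ lies on a proper face, so that closed-hull membership coincides with membership in the open simplex $\sigma(\cP_c^\partial)$. I expect the second step to be the main obstacle: one must check that each boundary point is genuinely selected on a set with non-empty interior accumulating at $c$, rather than merely being tied at $c$, and this combinatorial argument about realisable orderings is where the general-position hypothesis does the real work.
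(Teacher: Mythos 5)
Your proposal is correct and follows essentially the same route as the paper: identify the active index set at $c$ with $\cP_c^{\partial}$, apply the convex-hull formula \eqref{eqn:clarke_subdiff} with $\nabla d_p^2(c) = 2(c-p)$, and read off criticality from $\bs 0\in\partial\delta_{\cP}^{(k)}(c)$. You are somewhat more explicit than the paper on two points it treats lightly — the verification that each $p\in\cP_c^{\partial}$ is genuinely selected on a full-dimensional set accumulating at $c$ (which the paper delegates to the minimality claim in Lemma \ref{lem:cs_ball}), and the use of general position to pass from the closed convex hull to the open simplex $\sigma(\cP_c^{\partial})$ — and both of these elaborations are sound.
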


\begin{proof}
Without loss of generality we take $c=\bs{0}$. Recall the definition of $I(x)$ in \eqref{eqn:I}, and note that for $\delta_{\cP}^{(k)}$ the indexes in the set $I({\bs 0})$ correspond to the points in $\cP^{\partial}_{\bs 0}$. From \eqref{eqn:clarke_subdiff}, 
we have that
the Clarke subdifferential of $\delta_{\cP}^{(k)}$ at $\bs{0}$ is given by 
\[
\partial \delta_{\cP}^{(k)}(\bs{0}) = \mathrm{conv}(\{\nabla d_{p}^2(\bs{0}):p\in\cP^{\partial}_{\bs{0}}\}).
\] Since $d_p^2$ is the squared distance from $p$, we have $\nabla d_p^2({\bs{0}}) = -2p$, and therefore $\partial \delta_{\cP}^{(k)}(\bs{0}) = -2 \sigma(\cP^{\partial}_{\bs 0})$. 
Since ${\bs 0}\in \sigma(\cP^{\partial}_{\bs{0}})$ if and only if ${\bs 0}\in -2 \sigma(\cP^{\partial}_{\bs{0}})$ (reflected and scaled versions of the same simplex), and using Definition \ref{def:crit_pts}, the proof is  complete. 
\end{proof}

Next, we will show that all critical points in Theorem \ref{thm:critical_points} are indeed non-degenerate.

\begin{lemma}\label{lem:nondegen}
Let $c\in\Rd$, such that $c\in\sigma(\cP_c^{\partial})$. Then, $c$ is non-degenerate for $\delta_{\cP}^{(k)}$.
\end{lemma}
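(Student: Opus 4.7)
The plan is to verify the two conditions of Definition~\ref{def:nondegen} directly, exploiting the fact that the underlying $C^1$ functions are all of the form $f_p(x)=\|x-p\|^2$, whose Hessian is a constant multiple of the identity.

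As a preliminary step, already implicit in the proof of Lemma~\ref{lem:crit}, I would pin down the active set $I(c)$ defined in \eqref{eqn:I}. Near $c$, the $N_c^{\cI}$ points of $\cP_c^{\cI}$ remain strictly among the $k-1$ nearest, while points of $\cP\setminus\cB_c$ remain strictly farther than $r_c$. Hence $\delta_{\cP}^{(k)}$ locally coincides with the $(k-N_c^{\cI})$-th order statistic of $\{d_p^2 : p\in\cP_c^{\partial}\}$, and the general position assumption ensures that every $p\in\cP_c^{\partial}$ realizes this order statistic on a region of positive measure around $c$. Thus $I(c)$ is identified with the index set of $\cP_c^{\partial}$, and the relevant derivatives are $\nabla f_i(c)=2(c-p_i)$ and $\nabla^2 f_i(c)=2\id$.

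For Condition~1, I need that for each fixed $i\in I(c)$ the vectors $\{c-p_j : j\in I(c)\setminus\{i\}\}$ are linearly independent. Writing $\cP_c^{\partial}=\{p_1,\ldots,p_m\}$, suppose $\sum_{j\ne i} a_j(c-p_j)=0$. If $\sum_{j\ne i} a_j = 0$, then $\sum_{j\ne i} a_j p_j=0$ is an affine dependence among $\{p_j : j\ne i\}$, which general position (affine independence of $\cP_c^{\partial}$) rules out. If $\sum_{j\ne i} a_j \ne 0$, then $c$ is an affine combination of $\{p_j : j\ne i\}$, contradicting $c\in\sigma(\cP_c^{\partial})$, since the barycentric coordinates of $c$ relative to the affinely independent set $\cP_c^{\partial}$ are unique and strictly positive, so in particular $\lambda_i>0$.

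For Condition~2, each $\nabla^2 f_i(c)=2\id$, so $H_{\bs{\lambda}^*}(c)=\bigl(\sum_i \lambda_i^*\bigr)\cdot 2\id=2\id$, which is positive definite on all of $\R^d$ and hence invertible on $T(c)$; as a byproduct the quadratic index satisfies $\tilde\mu_c=0$. The main obstacle is the preliminary identification of $I(c)$ as $\cP_c^{\partial}$, which requires the order-statistic observation together with general position; once that is in hand, both conditions of Definition~\ref{def:nondegen} reduce to routine facts about affine independence.
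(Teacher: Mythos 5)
Your proof is correct and follows essentially the same route as the paper: identify $I(c)$ with (the index set of) $\cP_c^{\partial}$ and verify the two conditions of Definition~\ref{def:nondegen} directly, with Condition~2 being immediate because every Hessian equals $2I_{d\times d}$ and $\sum_i\lambda_i^*=1$. The one place you go beyond the paper is Condition~1: the paper asserts it ``holds immediately'' from general position, whereas your two-case argument correctly shows that affine independence of $\cP_c^{\partial}$ alone does not suffice --- one also needs $c\in\sigma(\cP_c^{\partial})$ to exclude the possibility that $c$ lies in the affine hull of a proper subset of $\cP_c^{\partial}$ (e.g., $c$ the midpoint of two antipodal points on the circumsphere), since linear independence of $\{c-p_j\}_{j\neq i}$ is equivalent to affine independence of $\{c\}\cup\{p_j\}_{j\neq i}$. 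This is a genuine refinement of the written proof rather than a different approach.
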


\begin{proof}
As before, we take $c={\bs 0}$.
In our setting we have $f_i = d_{p_i}^2$, and $\nabla f_i({\bs 0}) = -2p_i$. Since we assume the points are in general position, the first condition in Definition \ref{def:nondegen} holds immediately. The Hessian of $f_i$ is $H_i = 2I_{d\times d}$ (the identity matrix). Therefore, $H_{{\bs \lambda}^*} = 2I_{d\times d}$ everywhere, implying that the second condition in Definition \ref{def:nondegen} holds as well.
\end{proof}

\begin{remark}\label{rem:quad_index}
 The proof above shows that the Hessian is always positive definite, and therefore the quadratic index $\tilde\mu_c$ (see Section \ref{sec:morse_pw}) in this case is zero. 
\end{remark}

\begin{proof}[Proof for Theorem \ref{thm:critical_points}]
Follows immediately from Lemma \ref{lem:crit} and \ref{lem:nondegen}.
\end{proof}

\section{Critical points and homology} \label{sec:crit_hom}

In this section, we study the effect of the critical points of $d_{\cP}^{(k)}$ on the homology of its sub-level sets $B_r^{(k)}(\cP)$, and prove Theorem \ref{thm:homology_effect}.
Recall the definition of $U_c^\bullet,U_c^\circ$  in \eqref{eqn:local_sets}. A key observation in the special case of $\delta_{\cP}^{(k)}$ is that the homology of $U_c^\circ$ is simple to describe.

\begin{lemma}\label{lem:Uc_open}
Let $c$ be a non-degenerate critical point of $\delta_{\cP}^{(k)}$, of index $\mu_c$, and denote $\Delta_c := \binom{N^{\partial}_c-1}{\mu_c}$.
If $\mu_c>1$, then
    \[
        H_i(U_c^\circ)\cong \begin{cases} \mathbb{F} & i = 0,\\ \mathbb{F}^{\Delta_c} & i=\mu_c-1,\\
        0 &  \text{otherwise}.
        \end{cases}  
    \]
If $\mu_c=1$, then
    \[
        H_i(U_c^\circ)\cong \begin{cases} \mathbb{F}^{\Delta_c+1} & i = 0,\\ 0 &  \text{otherwise}.
        \end{cases}
    \]
\end{lemma}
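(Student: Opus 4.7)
The plan is to describe $U_c^\circ$ explicitly as a union of convex sets, apply the nerve lemma to pass to a combinatorial model, then apply the nerve lemma once more to reduce that model to the skeleton of a standard simplex. Write $N := N_c^\partial$ and $m := k - N_c^\cI = N - \mu_c$. On a small enough neighbourhood $U_c$ of $c$, the $N_c^\cI$ points of $\cP_c^\cI$ remain the closest to every $x$, so $\delta_\cP^{(k)}(x) < r_c^2$ holds if and only if at least $m$ of the open balls $\{B_{r_c}(p) : p \in \cP_c^\partial\}$ contain $x$. Consequently
\[
U_c^\circ \;=\; \bigcup_{I \in \binom{[N]}{m}} D_I, \qquad D_I := U_c \cap \bigcap_{i \in I} B_{r_c}(p_i),
\]
where each $D_I$ is open and convex.

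Next I would pin down the intersection pattern. The tangent cone of $D_I$ at $c$ is the open convex cone $C_I := \{v : L_i(v) < 0 \text{ for all } i \in I\}$ with linear forms $L_i(v) := -2\langle v, p_i - c\rangle$. The criticality condition $c \in \sigma(\cP_c^\partial)$ gives $\sum_i \lambda_i^* L_i \equiv 0$ for some $\lambda_i^* > 0$, so $C_I = \emptyset$ precisely when $I = [N]$; general position ensures that any $N-1$ of the $L_i$ are linearly independent. For a sufficiently small $U_c$ the set $D_I$ inherits this vacuity pattern: $D_I \neq \emptyset$ iff $I \subsetneq [N]$. Since $D_{I_1} \cap \cdots \cap D_{I_s} = D_{I_1 \cup \cdots \cup I_s}$ is again convex with the same emptiness criterion, $\{D_I\}_{|I|=m}$ is a good open cover of $U_c^\circ$.

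By the nerve lemma, $U_c^\circ$ is homotopy equivalent to the simplicial complex $\cN$ whose vertices are the $m$-subsets of $[N]$ and whose simplices are collections $\{I_1, \ldots, I_s\}$ with $I_1 \cup \cdots \cup I_s \subsetneq [N]$. Passing to complements $J_j := [N] \setminus I_j$ of size $\mu_c$, the simplex condition becomes $\bigcap_j J_j \neq \emptyset$. Now cover $\cN$ by the subcomplexes $\cN_l := \{J \in \binom{[N]}{\mu_c} : l \in J\}$ for $l \in [N]$: any collection of vertices in a fixed $\cN_l$ has $l$ in its intersection, so $\cN_l$ is a full simplex on $\binom{N-1}{\mu_c - 1}$ vertices; more generally $\cN_{l_1} \cap \cdots \cap \cN_{l_s}$ is a full simplex on $\binom{N-s}{\mu_c - s}$ vertices when $s \leq \mu_c$, and empty otherwise. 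A second nerve-lemma application identifies $\cN$ with $\{S \subseteq [N] : 1 \leq |S| \leq \mu_c\}$, the $(\mu_c - 1)$-skeleton of $\Delta^{N-1}$.

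The homology of the $(d{-}1)$-skeleton of $\Delta^{n}$ (for $d \leq n$) is standard: when $d \geq 2$ one has $H_0 = \mathbb{F}$, $H_{d-1} = \mathbb{F}^{\binom{n}{d}}$, and all other groups vanish; when $d = 1$ the $0$-skeleton is $n+1$ discrete points with $H_0 = \mathbb{F}^{n+1}$. Setting $d = \mu_c$, $n = N - 1$, and recalling $\binom{N_c^\partial - 1}{\mu_c} = \Delta_c$, recovers both cases of the lemma (for $\mu_c = 1$ we obtain $\mathbb{F}^{N} = \mathbb{F}^{\Delta_c + 1}$). The main technical obstacle I anticipate is the second paragraph --- namely, verifying carefully that for a small enough $U_c$ the non-emptiness pattern of the $D_I$ really matches that of the linearised cones $C_I$, and that this pattern propagates to all multiple intersections. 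Once this transfer is clean, the two nerve-lemma applications are routine.
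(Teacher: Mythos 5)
Your argument is correct, and it takes a genuinely different route from the paper. The paper derives Lemma \ref{lem:Uc_open} from the machinery of Agrachev--Pallaschke--Scholtes: it invokes the local normal form \eqref{eq:top_eq}, computes the min-max representation of the linear part $\ell$, and then applies their Theorem 4.1 together with Proposition 2.5, which replaces $U_c^\circ$ by an ``auxiliary complex'' $K_c$ built from the complements $\bar\cN_j$; the identification of $K_c$ with the $(\mu_c-1)$-skeleton of the simplex on $[N_c^\partial]$ is then asserted by inspection. You instead work directly with the geometry: $U_c^\circ$ is the union of the convex open sets $D_I$ (a point lies in $U_c^\circ$ iff it lies in at least $m=N_c^\partial-\mu_c$ of the open balls of radius $r_c$ about $\cP_c^\partial$, exactly as in Lemma \ref{lem:cs_ball}), and two applications of the nerve lemma reduce this to the skeleton. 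Your combinatorial complex $\cN$, after passing to complements, is precisely the paper's auxiliary complex, and your second nerve-lemma step is in effect a rigorous proof of the skeleton identification that the paper only states. What your approach buys is self-containedness (no normal form, no citation of Theorem 4.1 of the reference) at the cost of having to verify the good-cover hypotheses by hand. On that point, the ``main technical obstacle'' you flag is in fact easier than you suggest: you do not need a tangent-cone limiting argument for the emptiness of $D_{[N]}$, since $\|x-p_i\|^2 = r_c^2 - 2\langle x-c,\, p_i-c\rangle + \|x-c\|^2$ and summing against the barycentric weights $\lambda_i^*>0$ of $c\in\sigma(\cP_c^\partial)$ gives $\sum_i\lambda_i^*\left(\|x-p_i\|^2-r_c^2\right)=\|x-c\|^2\ge 0$, so no point whatsoever lies in all $N$ open balls; conversely, for any proper $J$ the point $c+tv$ with $v\in C_J$ and $t$ small lies in $D_J$, and since there are finitely many $J$ one radius of $U_c$ works for all of them. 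With that observation supplied, your proof is complete and arguably more transparent than the one in the paper.
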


The proof for Lemma \ref{lem:Uc_open} requires more  details from \cite{agrachev1997morse}, and is postponed to Section \ref{sec:add_proofs}. We use it here to prove the following simpler version of Theorem \ref{thm:agrachev_4.2}.

\begin{proposition}\label{prop:relative}  
Let $c\in\Rd$ be a critical point of $\delta_{\cP}^{(k)}$ of index $\mu_c$. Then, the following holds.
\begin{enumerate}
    \item If $\mu_c=0$, then $c$ is a local minimum, and 
    \[
    H_i(U_{c}^{\bullet},U_{c}^{\circ}) \cong \begin{cases}
        \mathbb{F} & i = 0,\\
        0 & i > 0.
    \end{cases}     
    \]
    \item If $\mu_c>0$, then 
    \[
    H_{i}(U_{c}^{\bullet},U_{c}^{\circ}) \cong 
    \begin{cases}
    \mathbb{F}^{\Delta_c} & i=\mu_c, \\
    0 & \text{otherwise}.
    \end{cases}
    \]
\end{enumerate}
\end{proposition}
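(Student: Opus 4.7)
The proof will split on whether $\mu_c = 0$ or $\mu_c > 0$, feeding into Theorem \ref{thm:agrachev_4.2} the values of $H_\ast(U_c^\circ)$ supplied by Lemma \ref{lem:Uc_open}. Before invoking those results, I would record a small local-structure observation: for any $x$ sufficiently close to $c$, continuity of the Euclidean distance forces the $N_c^\cI$ interior points of $\cP_c$ to remain strictly closer to $x$ than the points in $\cP_c^\partial$, while points of $\cP\setminus\cP_c$ remain strictly farther. Consequently, the $k$-th nearest squared distance at $x$ is the $(k-N_c^\cI)$-th smallest among the squared distances to $\cP_c^\partial$, which equals the $(\mu_c+1)$-th \emph{largest} of them.

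For $\mu_c = 0$, this means $\delta_\cP^{(k)}(x) = \max_{p\in\cP_c^\partial}\|x-p\|^2$ in a neighborhood of $c$. The hypothesis $c\in\sigma(\cP_c^\partial)$ then says that $c$ lies in the relative interior of the simplex spanned by $\cP_c^\partial$ and is equidistant from its vertices; this is the standard characterization of $c$ as the center of the smallest enclosing ball of $\cP_c^\partial$, and in particular as the minimizer of $x\mapsto\max_p\|x-p\|^2$. Hence $c$ is a (local and global) minimum of $\delta_\cP^{(k)}$ near itself, so $U_c^\circ=\emptyset$ and Theorem \ref{thm:agrachev_4.2}(1) gives exactly the claimed relative homology.

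For $\mu_c > 0$, the observation above shows that moving $x$ off $c$ along, e.g., a descent direction of the $(\mu_c+1)$-th largest distance yields $\delta_\cP^{(k)}(x) < \delta_\cP^{(k)}(c)$, so $U_c^\circ\neq\emptyset$ and Theorem \ref{thm:agrachev_4.2}(2) applies. For $i\geq 2$, the excision-type identity $H_i(U_c^\bullet,U_c^\circ)\cong H_{i-1}(U_c^\circ)$ combined with Lemma \ref{lem:Uc_open} produces $\mathbb{F}^{\Delta_c}$ when $i=\mu_c$ (which is the only way to hit the $H_{\mu_c-1}=\mathbb{F}^{\Delta_c}$ term) and $0$ otherwise in that range. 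At $i=1$, Theorem \ref{thm:agrachev_4.2}(2) gives $\mathbb{F}^{\alpha-1}$ where $\alpha$ is the number of components of $U_c^\circ$; by Lemma \ref{lem:Uc_open} one has $\alpha=1$ when $\mu_c>1$ and $\alpha=\Delta_c+1$ when $\mu_c=1$, which collapses into a single statement $H_{\mu_c}(U_c^\bullet,U_c^\circ)\cong\mathbb{F}^{\Delta_c}$ with zeros elsewhere. The $i=0$ vanishing is immediate from Theorem \ref{thm:agrachev_4.2}(2).

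The main obstacle is the $\mu_c=0$ case: cleanly identifying $\delta_\cP^{(k)}$ locally with $\max_{p\in\cP_c^\partial}\|\cdot-p\|^2$ and then reading off the smallest-enclosing-ball interpretation of the condition $c\in\sigma(\cP_c^\partial)$. Once that geometric step is in place, the remainder is essentially bookkeeping: feeding the outputs of Lemma \ref{lem:Uc_open} through Theorem \ref{thm:agrachev_4.2} and checking that the $\mu_c=1$ boundary case matches the uniform formula via the identity $\alpha=\Delta_c+1$.
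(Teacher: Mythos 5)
Your proposal is correct and follows essentially the same route as the paper: establish that locally $\delta_{\cP}^{(k)}$ reduces to a selection over $\{d_p^2 : p\in\cP_c^\partial\}$ (the paper's Corollary \ref{cor:local_cs}), show $c$ is a local minimum when $\mu_c=0$ (the paper argues this directly from $c\in\sigma(\cP_c^\partial)$, you via the equivalent smallest-enclosing-ball characterization), and for $\mu_c>0$ feed Lemma \ref{lem:Uc_open} into Theorem \ref{thm:agrachev_4.2}, handling the $\mu_c=1$ boundary case through $\alpha=\Delta_c+1$. No gaps.
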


\begin{proof}
When $\mu_c = 0$ we have that $|\cP_c| = k$. Additionally, from Corollary \ref{cor:local_cs}, there is a small neighborhood $U_c$, such that $\delta_{\cP}^{(k)}(x)= \delta_{\cP_c}^{(k)}(x) = \max_{p\in\cP_c}d_p^2(x)$ for all $x\in U_c$.
Note that at $c$ we have $\delta_{\cP}^{(k)}(c) = r_c^2 = d_p^2(c)$, for any $p\in\cP_c^{\partial}$. However, since $c\in \sigma(\cP_c^{\partial})$, for every point $x\in U_c$ there exists $p\in\cP_c^{\partial}$ such that $d_p(x)\ge d_p(c)$. Thus, $c$ is a local minimum.
The first part of the theorem then follows from the first part of Theorem \ref{thm:agrachev_4.2}. 

For $\mu_c>0$, $c$ is not a minimum,  so we refer to the second part of Theorem \ref{thm:agrachev_4.2}.
If $\mu_c>1$, then the result follows directly from Lemma 
\ref{lem:Uc_open}. For $\mu_c=1$, from Lemma \ref{lem:Uc_open}
the number of connected components of $U_c^{\circ}$ is 
$\alpha=\Delta_c + 1 = N_c^{\partial}$, and therefore, the second case in Definition \ref{thm:agrachev_4.2} reduces to $H_{i}(U_c^{\bullet},U_c^{\circ})\cong\mathbb{F}^{\Delta_c}$, for $i= 1$, and $0$ otherwise. 
\end{proof}

We can now prove Theorem \ref{thm:homology_effect}.

\begin{proof}[Proof of Theorem \ref{thm:homology_effect}]
Define
\[
B_r := \{x\in\R^d : d_{\cP}^{(k)}(x) \le r\},\quad\text{and}\quad B_r^\circ := \{x\in\R^d : d_{\cP}^{(k)}(x) < r\}.
\]
Then $U_{c}^{\circ}\simeq U_{c}^{\bullet}{\setminus}\{c\}$, and $B_{r_c}^{\circ}\simeq B_{r_c}{\setminus}\{c\}$, since the critical values of $d_{\cP}^{(k)}$ are distinct. 
By the excision theorem (cf.~Theorem 2.20 in \cite{hatcher_algebraic_2002}), we have 
\begin{equation}\label{eq:excision}
H_i(B_{r_c},B_{r_c}^{\circ})\cong H_i(U_{c}^{\bullet},U_{c}^{\circ}).
\end{equation}

Next, consider the  {long exact sequence} for the relative homology,
\begin{equation*}
\cdots\longrightarrow
H_{i+1}(B_{r_c}, B_{r_c}^{\circ}){\longrightarrow}
H_{i}(B_{r_c}^{\circ}) {\longrightarrow} H_{i}(B_{r_c}) {\longrightarrow}  H_{i}(B_{r_c}, B_{r_c}^{\circ}) 
\longrightarrow\cdots
\end{equation*}
Firstly, consider the case where $\mu_c>0$. Then for $i\ne \mu_c,\mu_c-1 $, from Proposition \ref{prop:relative} we have
\[
0 \longrightarrow
H_{i}(B_{r_c}^{\circ}) {\longrightarrow} H_{i}(B_{r_c}) \longrightarrow 0,
\]
which implies $H_{i}(B_{r_c}^{\circ}) \cong H_{i}(B_{r_c})$. In other words, there is no change in the $i$-th homology of $B_r$ when reaching the point $c$.
For $i=\mu_c$, we have
\[
0 \longrightarrow
H_{i}(B_{r_c}^{\circ}) 
{\longrightarrow} H_{i}(B_{r_c}) {\longrightarrow} 
\mathbb{F}^{\Delta_c}
{\longrightarrow} H_{i-1}(B_{r_c}^{\circ}) 
{\longrightarrow} H_{i-1}(B_{r_c}) \longrightarrow 0.
\]
Exactness then implies that 
\[
H_{i}(B_{r_c}) \cong H_{i}(B_{r_c}^{\circ})\oplus \mathbb{F}^{\Delta_c^+},\quad\text{and}\quad H_{i-1}(B_{r_c}^{\circ}) \cong H_{i-1}(B_{r_c})\oplus \mathbb{F}^{\Delta_c^-},
\] 
for some $\Delta_c^+,\Delta_c^-\ge 0$, with $\Delta_c^++\Delta_c^-=\Delta_c$.

Next, assume that $\mu_c=0$. Similarly to the above, for $i>0$ there is no change in the homology. For $i=0$ we have
\[
0 \longrightarrow
H_{0}(B_{r_c}^{\circ}) {\longrightarrow} H_{0}(B_{r_c}){\longrightarrow} \mathbb{F} \longrightarrow 0.
\]
By exactness, we have $H_{0}(B_{r_c}) \cong H_{0}(B_{r_c}^{\circ}) \oplus\mathbb{F}$. 

Finally, note that there exists $\epsilon>0$ such that the interval $[r_c-\eps,r_c+\eps]$ contains exactly one critical value (namely, $r_c$). From Proposition 2.1 in \cite{agrachev1997morse}, we have that $B_{r_c-\eps}\simeq B_{r_c}^{\circ}$, and $B_{r_c+\eps}\simeq B_{r_c}$. 
This completes the proof.
\end{proof}

\section{Additional proof elements}\label{sec:add_proofs}

In this section we provide more details required for the proofs in Sections \ref{sec:crit_pts} and \ref{sec:crit_hom}.

\subsection{Geometric ingredients}\label{sec:geom_ing}
To use the framework presented in Section \ref{sec:morse_pw}, we will show  that for every $c\in\Rd$ we can  find a small enough neighborhood, where  $d_{\cP}^{(k)}$ is a \emph{continuous selection} of $\set{d_p : p\in\cP^{\partial}_c}$.

\begin{lemma}\label{lem:cs_ball}
Let $\cP\subset\Rd$ be a finite set, and let $c\in\Rd$. Then there exists an open ball $U_c$, centered at $c$, where $d_{\cP}^{(k)}(x)$ 
is a continuous selection of $\{d_{p}:p\in\cP^{\partial}_c\}$, and this representation is minimal.
\end{lemma}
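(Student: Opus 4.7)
The plan is to split the lemma into two claims: (i) existence of an open ball $U_c$ on which $d_{\cP}^{(k)}$ takes values only from $\{d_p : p\in \cP_c^{\partial}\}$, and (ii) minimality of this list. For (i) I would use that $d_{\cP}^{(k)}$ is continuous with $d_{\cP}^{(k)}(c)=r_c$, while $d_p(c)<r_c$ strictly for every $p\in\cP_c^{\cI}$ and $d_p(c)>r_c$ strictly for every $p\in\cP\setminus\cP_c$. Shrinking $U_c$, these strict inequalities persist uniformly: the $N_c^{\cI}$ interior points of $\cP$ stay strictly closer to any $x\in U_c$ than $d_{\cP}^{(k)}(x)$, while the points of $\cP\setminus \cP_c$ stay strictly farther. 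Since $N_c^{\cI}<k$, the point of $\cP$ achieving the $k$-th smallest distance to $x$ must then lie in $\cP_c^{\partial}$, which yields $d_{\cP}^{(k)}(x)=d_p(x)$ for some $p\in\cP_c^{\partial}$ and establishes the continuous selection property.

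For minimality, I need to show that every $p\in\cP_c^{\partial}$ is realized as the $k$-th nearest neighbor of some $x\in U_c$. Setting $m := k-N_c^{\cI}$, so that $1\le m\le N_c^{\partial}$, by step (i) it suffices to arrange that $p$ is the $m$-th closest point to $x$ among $\cP_c^{\partial}$. For the perturbation $x = c+tv$, expanding $d_q^2(c+tv) = r_c^2 - 2t\langle v, q-c\rangle + t^2$ for $q\in\cP_c^{\partial}$ shows that, for small $t>0$, the ranking of $\cP_c^{\partial}$ by distance from $x$ is the reverse of the ranking by the values $\langle v, q-c\rangle$. I therefore need a direction $v$ for which exactly $m-1$ points of $\cP_c^{\partial}\setminus\{p\}$ satisfy $\langle v, q-c\rangle > \langle v, p-c\rangle$.

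Realizing an arbitrary such sign pattern is the crux of the argument, and is where general position enters. Since $|\cP_c^{\partial}|\le d+1$ and the points of $\cP$ are in general position, $\cP_c^{\partial}$ is affinely independent, so $\{q-p : q\in\cP_c^{\partial}\setminus\{p\}\}$ is linearly independent, making the map $v\mapsto (\langle v, q-p\rangle)_{q\ne p}$ surjective onto $\R^{N_c^{\partial}-1}$. For any $(m-1)$-subset $S_1\subset\cP_c^{\partial}\setminus\{p\}$ I can therefore choose $v$ with $\langle v, q-p\rangle>0$ for $q\in S_1$ and $\langle v, q-p\rangle<0$ for $q$ in the complement; taking $t>0$ small enough that $x=c+tv\in U_c$ then makes $p$ the $k$-th nearest neighbor of $x$. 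The main obstacle I expect is precisely this minimality step: the selection in (i) is routine continuity bookkeeping, but witnessing every $p\in\cP_c^{\partial}$ requires the linear-algebraic use of general position together with a careful check that the internal order on $\cP_c^{\partial}$ combines with the interior/exterior orderings from (i) to place $p$ at global rank $k$.
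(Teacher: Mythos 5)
Your proof is correct, and its first step coincides with the paper's: the paper makes the same persistence-of-strict-inequalities argument quantitative by exhibiting explicit radii $\rho_{\mathrm{in}},\rho_{\mathrm{out}}$ so that every $x$ near $c$ has all of $\cP_c^{\cI}$ strictly closer and all of $\cP\setminus\cP_c$ strictly farther than its $k$-th nearest neighbor, which must therefore lie in $\cP_c^{\partial}$. The genuine divergence is in the minimality step. The paper perturbs only in the direction $\hat p_c=(p-c)/\|p-c\|$, shows that $p$ becomes the strictly closest point of $\cP_c^{\partial}$ to $z=c+\epsilon\hat p_c$, and concludes that $p$ is ``one of the $k$ nearest neighbors of $z$''. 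That places $p$ at global rank $N_c^{\cI}+1$, which equals $k$ only when $m:=k-N_c^{\cI}=1$; being among the $k$ nearest is weaker than being the $k$-th nearest, which is what is required for $d_{\cP}^{(k)}(z)=d_p(z)$ and hence for $d_p$ to be an essentially active selection function. Your argument treats the general case: using affine independence of $\cP_c^{\partial}$ (general position plus $N_c^{\partial}\le d+1$), the map $v\mapsto(\langle v,q-p\rangle)_{q\ne p}$ is onto, so you can prescribe a direction in which exactly $m-1$ of the other boundary points overtake $p$, putting $p$ at rank exactly $k$. This fills in a step the paper's proof leaves implicit (and which, read literally, only covers the case $N_c^{\cI}=k-1$). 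The one point you should make explicit is that minimality in the sense of \eqref{eqn:I} requires $c\in\cl\left(\Int\set{x:d_{\cP}^{(k)}(x)=d_p(x)}\right)$ rather than a single witness; this follows from your construction because all the ranking inequalities at $x=c+tv$ are strict (the expansion $d_q^2(c+tv)=r_c^2-2t\langle v,q-c\rangle+t^2\|v\|^2$ is exact), so they persist on an open neighborhood of $x$ for every sufficiently small $t>0$.
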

\begin{proof}
Let $r_{\mathrm{in}}=\max_{p\in\cP_c^{\cI}}\|p-c\|$, and $r_{\mathrm{out}}=\min_{p\in\cP\setminus \cP_c}\|p-c\|$. Define
\[
\rho_{\mathrm{in}} = \frac{r_c - r_{\mathrm{in}}}{2},\quad
\rho_{\mathrm{out}} = \frac{r_{\mathrm{out}} - r_c}{2},\quad\text{and}\quad
\rho = \min\{\rho_{\mathrm{in}}, \rho_{\mathrm{out}}\}.
\]
Let $z\in B_{\rho}(c)$. Then the open ball of radius $\frac{r_c+r_\mathrm{in}}{2}$ centered at $z$ includes $B_{r_{\mathrm{in}}}(c)$, and thus all the points in $\cP^{\cI}_c$. In addition, this open ball is included in $B_{r_c}(c)$, and therefore it excludes the points of $\cP\setminus\cP^{\cI}_c$. 
Similarly, the open ball of radius $\frac{r_c+r_{\mathrm{out}}}{2}$ centered at $z$, includes $\cP^{\partial}_c$ and excludes the points of $\cP\setminus\cP_c$.
Thus, setting $U_c=B_{\rho}(c)$ concludes the first part of the proof. 

Next, let $p\in\cP^{\partial}_c$ and denote $\hat{p}_c=(p-c)/\|p-c\|$. Let $\epsilon>0$ sufficiently small, and denote $z=c+\epsilon\hat{p}_c\in U_c$. 
Then, $p$ is necessarily one of the $k$-nearest neighbors of $z$, since for all $q\in\cP^{\partial}_c{\setminus} \{p\}$, we have
\begin{equation*}
\begin{split}
\|q-z\|^2 & = r_c^2 + \epsilon^2 -2\langle q-c, \epsilon\hat{p}_c\rangle
>
r_c^2 + \epsilon^2 -2r_c\epsilon 
=
\|p-z\|^2.
\end{split}
\end{equation*}
Thus, the representation of $d_{\cP}^{(k)}$ as a continuous selection of $\{d_p:p\in\cP^{\partial}_c\}$ is minimal. 
\end{proof}

The $k$-NN distance function is tightly related to the order-$k$ Voronoi tessellation \cite{edelsbrunner1985voronoi,lee1982knnVoronoi}. This is a generalization of  the (order-$1$) Voronoi tessellation, that decomposes $\Rd$ into convex regions whose points have the same $k$-nearest-neighbors.
Formally, let $\cX\subset\cP$ be a subset of size $k$. Then the order-$k$ Voronoi cell of $\cX$ is defined as
\[
\vor(\cX,\cP) := \{y\in \Rd:\|x-y\|\leq\|x'-y\|,\text{ for all }x\in\cX\text{ and }x'\in\cP{\setminus}\cX\}.
\]
Alternatively, we can write
\[
\vor(\cX,\cP) = 
\bigcap_{x\in\cX} \vor(x,\cP{\setminus}\cX),
\]
where $\vor(x,\cP{\setminus}\cX)$ is a standard Voronoi cell.
Note that $\vor(\cX,\cP)$ is a convex set, and  can also be empty. If $\vor(\cX,\cP)\ne\emptyset$ we say that $\cX$ is a $k$-NN subset.

Let $\cP = \set{p_1,\ldots,p_n}\subset\Rd$, and denote all the $k$-NN subsets of $\cP$ by $\cP_1,\ldots,\cP_J$.
In addition, for any $c\in\Rd$, define 
$\Phi_{c}=\{1\leq j\leq J:c\in\vor(\cP_j,\cP)\}$, and for all $j\in\Phi_c$, denote by $\cN_j$ the set of indices, such that $\cP_j\cap\cP^{\partial}_c=\{p_i:i\in\cN_j\}$. 
Using the definitions above, we can write $d_{\cP}^{(k)}$ as 
\begin{equation}\label{eq:min_max}
d_\cP^{(k)} (x) = \min_{1\leq j\leq J} \max_{p\in\cP_j} d_{p}(x).    
\end{equation}
We can refine this representation, using Lemma \ref{lem:cs_ball}.

\begin{corollary}\label{cor:local_cs}
Let $\cP\subset\Rd$ be a finite set, and let $c\in\Rd$. Then there exists a neighborhood $U_c$ where
\[
d_{\cP}^{(k)}(x) = \min_{j\in\Phi_{c}}\max_{p\in\cP_j} d_p(x)
=
\min_{j\in\Phi_{c}}\max_{i\in\cN_j} d_{p_i}(x)
.
\]
In particular, in $U_c$, we have $d_{\cP}^{(k)} \equiv d_{\cP_c}^{(k)}$. 
\end{corollary}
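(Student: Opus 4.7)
The plan is to combine Lemma~\ref{lem:cs_ball} with a direct analysis of the order-$k$ Voronoi decomposition near $c$, carrying out two successive shrinkings of the neighborhood, one for each equality in the statement.

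For the first equality, I would start from the global min-max representation \eqref{eq:min_max} and discard all terms indexed by $j \notin \Phi_c$. Since every Voronoi cell $\vor(\cP_j,\cP)$ is closed and there are finitely many of them, for each $j \notin \Phi_c$ the point $c$ has strictly positive distance to $\vor(\cP_j,\cP)$, so taking the minimum over these finitely many distances yields a radius $\rho_1>0$ such that $B_{\rho_1}(c)$ meets only the cells indexed by $\Phi_c$. Inside $B_{\rho_1}(c)$, any $x$ lies in some $\vor(\cP_j,\cP)$ with $j \in \Phi_c$, on which $\max_{p \in \cP_j} d_p(x) = d_\cP^{(k)}(x)$ by the defining property of the order-$k$ cell, while every other $j' \in \Phi_c$ satisfies $\max_{p \in \cP_{j'}} d_p(x) \geq d_\cP^{(k)}(x)$; together these give $d_\cP^{(k)}(x) = \min_{j \in \Phi_c}\max_{p \in \cP_j} d_p(x)$ throughout $B_{\rho_1}(c)$.

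For the second equality, the idea is to restrict each inner max to the indices in $\cN_j$. Every $\cP_j$ with $j \in \Phi_c$ consists of $k$ nearest neighbors of $c$ and therefore decomposes as $\cP_j = \cP_c^{\cI} \cup (\cP_j \cap \cP_c^\partial)$: all points strictly inside $\cB_c$ must appear in $\cP_j$, and the remaining slots are filled with points of $\partial \cB_c$. Since $d_p(c) < r_c$ for $p \in \cP_c^{\cI}$ and $d_p(c) = r_c$ for $p \in \cP_c^\partial$, continuity gives a radius $\rho_2 > 0$ such that $\max_{p \in \cP_c^{\cI}} d_p(x) < \min_{p \in \cP_c^\partial} d_p(x)$ for every $x \in B_{\rho_2}(c)$; on this ball the inner max is attained only on the boundary part, and $\max_{p \in \cP_j} d_p(x) = \max_{i \in \cN_j} d_{p_i}(x)$. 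Setting $U_c := B_{\min(\rho_1,\rho_2)}(c)$ (or its intersection with the neighborhood furnished by Lemma~\ref{lem:cs_ball}) yields both equalities. The closing assertion $d_\cP^{(k)} \equiv d_{\cP_c}^{(k)}$ on $U_c$ is then immediate, since every $\cP_j$ with $j \in \Phi_c$ is contained in $\cP_c$, so the derived min-max expression already involves only points of $\cP_c$, and the same reasoning applied with $\cP_c$ in place of $\cP$ gives the same formula for $d_{\cP_c}^{(k)}$.

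Nothing here is technically hard; the argument is essentially bookkeeping among three representations (the global min-max, the $\Phi_c$-restricted one, and the $\cN_j$-restricted one). The only point to be careful about is ensuring that the successive shrinkings remain compatible with the minimal continuous-selection property from Lemma~\ref{lem:cs_ball}, but since each step only shrinks an already-admissible neighborhood, this is automatic.
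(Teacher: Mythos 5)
Your argument is correct and follows essentially the same route the paper intends: the corollary is meant to follow from the global representation \eqref{eq:min_max} together with Lemma~\ref{lem:cs_ball}, whose proof performs exactly your two shrinkings (your $\rho_1$ matches the paper's $\rho_{\mathrm{out}}$, discarding cells with $j\notin\Phi_c$, and your $\rho_2$ matches $\rho_{\mathrm{in}}$, ensuring the inner max is attained on $\cP_c^\partial$). The one observation you supply explicitly that the paper leaves implicit --- that $\cP_c^{\cI}\subset\cP_j$ for every $j\in\Phi_c$, so that restricting the max to $\cN_j$ loses nothing --- is correct and worth having on record.
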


\subsection{Topological ingredients} \label{sec:hom_ing}

Our goal here is to provide a  refined local description for  $\delta_{\cP}^{(k)}$, which will lead to the proof of Lemma \ref{lem:Uc_open}.

Let $c\in\Rd$ be a critical point of $\delta_{\cP}^{(k)}$. 
By Lemma \ref{lem:cs_ball}, there exists a neighborhood $U_c\subset\Rd$, in which
$\delta_{\cP}^{(k)}$ is a continuous selection of $\{d_{p}^2:p\in\cP^{\partial}_c\}$. Moreover, from \eqref{eqn:local_approx}, and since the quadratic index is $0$, it is locally topologically equivalent to 
\begin{equation}\label{eq:top_eq}
g(y)=\delta_{\cP}^{(k)}(c) + \ell(y_1,\ldots,y_{N_c^{\partial}-1}) + \sum_{j=N_c^{\partial}}^{d}y_j^2,\quad y=(y_1,\ldots,y_d)\in \tilde U_{\bs 0}.
\end{equation}
Furthermore, the function $\ell(y_1,\ldots,y_{N_{c}^{\partial}-1})$ admits a min-max representation as a continuous selection of linear functions \cite{agrachev1997morse}. The exact representation is given by the following lemma.

\begin{lemma}
The min-max representation of $\ell(y_1,\ldots,y_{N_{c}^{\partial}-1})$ is given by 
\begin{equation}\label{eq:l_minmax}
\ell(y_1,\ldots,y_{N_{c}^{\partial}-1}) 
=
\min_{j\in\Phi_{c}}\max_{i\in \cN_j} \ell_{i}(y),    
\end{equation}
where $\ell_i(y)$ is one of the functions $y\rightarrow y_l$ ($1\le l\le N_c^\partial-1$), or $y\rightarrow -\sum_ {l=1}^{N_{c}^{\partial}-1}y_l$.    
\end{lemma}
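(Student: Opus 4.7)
The plan is to transport the explicit min--max description of $\delta_{\cP}^{(k)}$ from Corollary \ref{cor:local_cs} through the normal form of Theorem 2.3 in \cite{agrachev1997morse}, and then to identify the individual linear pieces $\ell_i$ via a canonical choice of coordinates.

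First, by Corollary \ref{cor:local_cs}, inside $U_c$ we have
\[
\delta_{\cP}^{(k)}(x) = \min_{j\in\Phi_c}\max_{i\in\cN_j} d_{p_i}^2(x).
\]
Each branch $d_{p_i}^2$, indexed by a point of $\cP_c^{\partial}$, expands around $c$ as $d_{p_i}^2(c) - 2\langle p_i-c,\, x-c\rangle + \|x-c\|^2$, so the linear parts of the branches are exactly the gradients $-2(p_i-c)$ while the quadratic tails $\|x-c\|^2$ agree across branches. The homeomorphism provided by Theorem 2.3 of \cite{agrachev1997morse} is constructed simultaneously from the whole family $\{d_p^2 : p\in\cP_c^{\partial}\}$, and therefore conjugates each smooth branch to a specific branch of the model \eqref{eq:top_eq} of the form $y\mapsto \delta_{\cP}^{(k)}(c) + \ell_i(y_1,\ldots,y_{N_c^{\partial}-1}) + \sum_{j=N_c^{\partial}}^{d} y_j^2$. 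Since this identification respects the selection index $I(x)$, the min--max structure from Corollary \ref{cor:local_cs} is carried through to $\ell$, which yields \eqref{eq:l_minmax} with $\ell_i$ the linear piece associated to $p_i$.

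Second, I would pin down the form of each $\ell_i$. Since $c\in\sigma(\cP_c^{\partial})$ there exist weights $\lambda_i^*>0$ with $\sum_i \lambda_i^* = 1$ and $\sum_i \lambda_i^*(p_i-c) = {\bs 0}$, so the gradients $\{-2(p_i-c)\}$ admit exactly one linear relation, with strictly positive coefficients. By general position (as used in Lemma \ref{lem:nondegen}), any $N_c^{\partial}-1$ of them are linearly independent and span an $(N_c^{\partial}-1)$-dimensional subspace. A linear change of coordinates on the first $N_c^{\partial}-1$ variables, absorbed into the homeomorphism, sends $N_c^{\partial}-1$ chosen gradients to the standard basis vectors $e_1,\ldots,e_{N_c^{\partial}-1}$, so that the associated pieces become $y\mapsto y_l$. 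The remaining gradient is then forced by the linear relation to be a negative combination of the $e_l$, which an additional positive rescaling of the coordinates, absorbing the weights $\lambda_i^*$, turns into $-\sum_{l=1}^{N_c^{\partial}-1} y_l$; positive rescalings preserve all min--max orderings, so they leave the selection structure from the first step intact.

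The main obstacle is the branchwise compatibility invoked in the first step. Theorem 2.3 of \cite{agrachev1997morse} states only that the aggregate function $\delta_{\cP}^{(k)}$ is topologically equivalent to the model, so one has to inspect the proof of that theorem to verify that the constructed homeomorphism simultaneously conjugates each smooth branch $d_{p_i}^2$ to the corresponding branch of the model, rather than just matching the overall function. This simultaneous treatment is precisely what non-degeneracy (Definition \ref{def:nondegen}) is designed to enable, as it provides a common chart adapted to all the branches indexed by $I(c)=\cP_c^{\partial}$.
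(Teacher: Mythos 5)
Your proposal is correct and follows essentially the same route as the paper: transport the min--max structure from Corollary \ref{cor:local_cs} via the second-order expansion of each $d_{p_i}^2$ (identical quadratic tails, linear parts given by the gradients $-2(p_i-c)$), and then use the convex-combination relation $\sum_i\lambda_i^*(p_i-c)=\bs{0}$ from criticality, together with general position, to normalize the linear pieces to $y_l$ and $-\sum_l y_l$ by a coordinate change. The ``branchwise compatibility'' caveat you raise about Theorem 2.3 of \cite{agrachev1997morse} is a fair one, but the paper's own proof is equally informal on that point (it identifies $\ell$ with the min--max term ``up to a change of coordinates'' after an approximate expansion), so your argument is not weaker than the original.
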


\begin{proof}
Denote $f=\delta_{\cP}^{(k)}$, and assume without loss of generality that $c=\bs{0}$. In addition, denote $q=N_{c}^{\partial}-1$, $H=\R^{q}\times\{0\}^{d-q}$, and $H^{\perp}=\{0\}^{q}\times\R^{d-q}$. Recall that $\cP_{\bs 0}^\partial$ all lie on a $q$-dimensional plane, and assume without loss of generality, that this plane is $H$. 

Let $x\in\Rd$, such that $\|x\|$ is sufficiently small.
Using Corollary \ref{cor:local_cs}, we know that $f(x)$ is determined by one of the functions $\{d_p^2:p\in\cP^{\partial}_c\}$ at $\bs{0}$. We can approximate $f(x)$ based on the second order approximations of $d_p^2$ around $\bs 0$. Namely, 
\begin{equation}\label{eq:f_approx}
f(x)\approx f({\bs 0}) + \min_{j\in\Phi_{c}}\max_{i\in \cN_j} \left(\ell_i(x) + \|x\|^2\right),
\end{equation}
where $\ell_i(x):= \langle\nabla d_{p_i}^2(\bs{0}), x\rangle$, and we used the fact that the Hessian of $d_p^2$ is $2I_{d\times d}$.
Let $x=x^{_\parallel}+x^{_\perp}$, where $x^{_\parallel},x^{_\perp}$ denote the projections of $x$ to $H,H^{\perp}$, respectively. Note that since $\nabla d_{p}^2(\bs{0})\in H$, for all $p\in\cP^{\partial}_c$, we have $\langle\nabla d_{p}^2(\bs{0}), x\rangle = \langle\nabla d_{p}^2(\bs{0}), x^{_\parallel}\rangle$  for all $x\in\Rd$. In other words, the linear terms in \eqref{eq:f_approx} depend only on the first $q=N_c^\partial-1$ coordinates of $x$. 
Thus, for $x^{_\parallel}$, we can assume that the second order term is negligible, while for $x^{_\perp}$ the first order term vanishes. Therefore, we have
\begin{equation}\label{eq:f_approx_final}
f(x)\approx f({\bs 0})+\min_{j\in\Phi_{c}}\max_{i\in \cN_j} \ell_i(x^{_\parallel}) +\|x^{_\perp}\|^2.
\end{equation}
Finally, since the gradients $\{\nabla d^{2}_{\cP}(\bs{0}):p\in\cP_{c}^{\partial}\}$ are linearly dependent, we can express $\ell_{N_c^{\partial}}(x^{_\parallel})$ as $\ell_{N_c^{\partial}}(x^{_\parallel}) = -\sum_{i=1}^{N_{c}^{\partial}-1} \eta_{i} \ell_i(x^{_\parallel})$, where $\eta_{i} := \lambda_{i}^{*}/\lambda_{N_{c}^{\partial}}^{*}$, where $\bs{\lambda}^{*}=(\lambda_{1}^{*},\ldots,\lambda_{N_{c}^{\partial}}^{*})$ is defined in Definition \ref{def:nondegen}. 
Therefore, the form \eqref{eq:f_approx_final} is the same as \eqref{eq:top_eq} up to a change of coordinates, and we can identify $\ell(y_1,\ldots,y_{N_{c}^{\partial}-1})$ with the min-max term in \eqref{eq:f_approx_final}.
\end{proof}

\begin{proof}[Proof of Lemma \ref{lem:Uc_open}]
Based on the min-max representation for $\delta_{\cP}^{(k)}$ we obtained in \eqref{eq:top_eq} and \eqref{eq:l_minmax}, we can use
Theorem 4.1 in \cite{agrachev1997morse} to establish the homology of $U_c^\circ$. This theorem makes use of  an `auxiliary complex', which we compute below for the special case of $\delta_{\cP}^{(k)}$.

Take a critical point $c\in\R^d$ and assume without loss of generality that $\cP_{c}^{\partial}=\{p_1,\ldots,p_{N_c^{\partial}}\}$. For $\ell(y)$ in \eqref{eq:top_eq}, define 
$
S=\{y\in \tilde U_{\bs{0}}: \ell(y)<0 \}. 
$
Following \eqref{eq:l_minmax} we have $S= \bigcup_{j\in\Phi_c} S_j,$ 
where $S_j:=\{y\in \tilde{U}_{\bs{0}}:\ell_i(y)<0,i\in\cN_j\}$.
Given this representation, it was shown in Proposition 2.5 in \cite{agrachev1997morse}
that the following simplicial complex is homotopy equivalent to $S$.
For each $\cN_j$ we define its complement by $\bar{\cN}_j:=\{1,\ldots,N_c^{\partial}\}{\setminus}\cN_j$.
The \emph{auxiliary complex} of $c$, denoted $K_c$, is the nerve of the simplexes $\{\bar{\cN}_j: j\in\Phi_c\}$. 
In fact, for the special case of $\delta_{\cP}^{(k)}$, we observe that $K_c$ is just the $(\mu_c-1)$-dimensional skeleton (recall that $\mu_c=N_c -k$) of the  simplex spanned by $\{1,\ldots,N_c^{\partial}\}$.
Note that the dimension of $K_c$ does not exceed $d$.
See Figure \ref{fig:equi_complex} for examples of this auxiliary complex. 

For the case where the quadratic index is zero (as in our case), Theorem 4.1 in \cite{agarwal_beyond_2005}  states that
$H_i(U_c^\circ) \cong H_i(K_c)$.
Since $K_c$ is the $(\mu_c-1)$ dimensional skeleton of a $(N_c^\partial-1)$-dimensional simplex, we have following. Denote $\Delta_c := \binom{N^{\partial}_c-1}{\mu_c}$. If $\mu_c>1$,
\[
        H_i(K_c)\cong \begin{cases}
        \mathbb{F} & i=0,
        \\
        \mathbb{F}^{\Delta_c} & i=\mu_c-1,
        \\
        0 &  otherwise,
        \end{cases}
\]
If $\mu_c=1$,
\[
        H_i(K_c)\cong \begin{cases}
        \mathbb{F}^{\Delta_c+1} & i=0,
        \\
        0 &  otherwise.
        \end{cases}
\]
This completes the proof.\end{proof}

\begin{figure}[t]
     \centering
     \begin{subfigure}[b]{0.9\textwidth}
         \centering
         \includegraphics[width=0.73\textwidth]{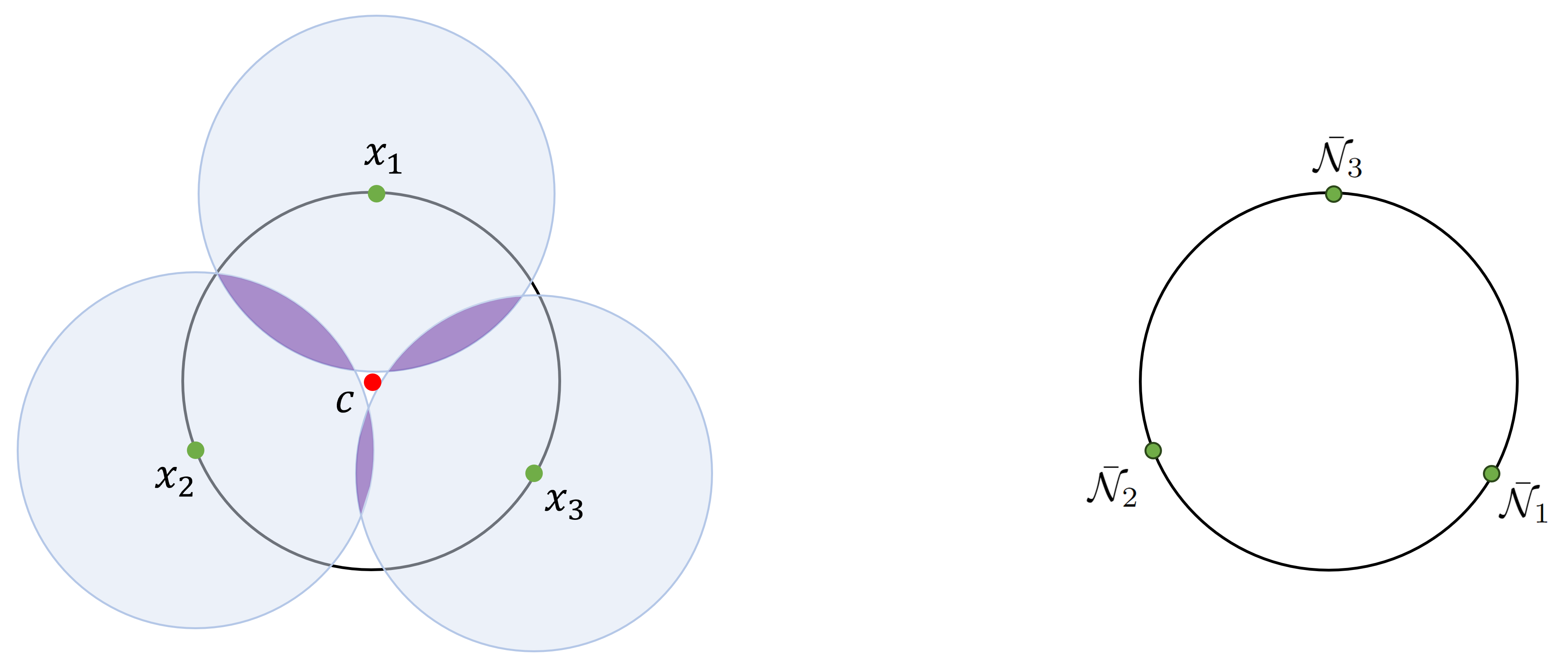}
     \end{subfigure}
     \vfill
     \begin{subfigure}[b]{0.9\textwidth}
         \centering
         \includegraphics[width=0.73\textwidth]{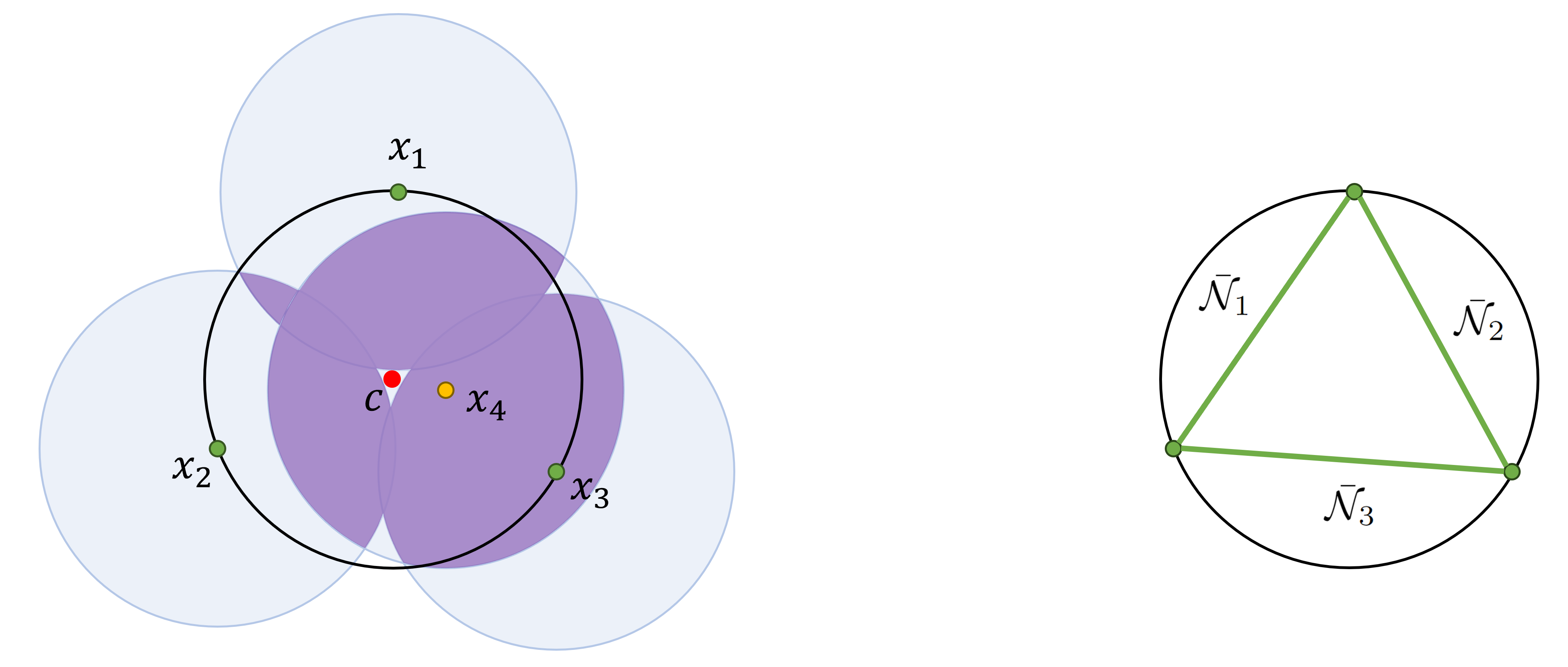}
     \end{subfigure}
        \caption{
        The auxiliary complex used in the proof of Lemma \ref{lem:Uc_open}. 
        {\bf Left:} In both figures $c$ is a critical point of $d_{\cP}^{(2)}$, and the purple regions are the $2$-fold cover, at radius $r$ that is slightly smaller than $r_c$. {\bf Right:} The corresponding auxiliary complex $K_c$ (in green). 
        {\bf Top:} The critical point $c$ is of index $\mu_c=1$. The sets $\cN_1,\cN_2,\cN_3$ are equal to $\{1,2\},\{1,3\},\{2,3\}$, respectively. 
        Thus, the sets $\bar{\cN}_1,\bar{\cN}_2,\bar{\cN}_3$ that span $K_c$, are equal to $\{3\},\{2\},\{1\}$, respectively. 
        (b) The critical point $c$ is of index $\mu_c=2$. The sets $\cN_1,\cN_2,\cN_3$ are equal to $\{1\},\{2\},\{3\}$. Thus, the sets $\bar{\cN}_1,\bar{\cN}_2,\bar{\cN}_3$ that span $K_c$, are equal to $\{2,3\},\{1,3\},\{1,2\}$, respectively.
        }
        \label{fig:equi_complex}
\end{figure}

\section{The expected number of critical points}

In this section we examine the $k$-NN distance function for a random point set $\cP$. In this setting, we analyze the expected value for the number of critical points with a given index, contained within a compact region.

A \emph{homogeneous Poisson point process} in $\Rd$ with intensity $\nu>0$, has the following properties:
\begin{enumerate}
    \item The number of points in a Borel set $A\subset\Rd$ has a Poisson distribution with parameter $\nu|A|$ (where $|\cdot|$ is the volume).
    \item If $A$ and $B$ are two disjoint Borel sets, then the number of points in $A$ and the number of points in $B$ are independent random variables.
\end{enumerate}
The homogeneous Poisson process is a typical case study in stochastic geometry and topology. It has been shown that various topological quantities are linear in $\nu$ (in expectation) \cite{edelsbrunner_expected_2017,edelsbrunner2019poisson,Reani2023Coupled}.
We will show that the critical points for $d_{\cP}^{(k)}$ are no different.

\begin{theorem}\label{thm:exp_crit_pts}
Let $\cP_{\nu}\subset\Rd$ be a homogeneous Poisson point process with intensity $\nu>0$. Let $k>0$, and let $0\leq i \leq d$. Let $\Omega\subset\Rd$ be a compact subset, and denote by $F_{i}$ the number of critical points of $d_{\cP_{\nu}}^{(k)}$, with index $\mu_c=i$, lying in $\Omega$. Then,
\[
\E\{F_{i}\} =D_{k,i}\nu,
\]
where $D_{k,i}$ is a constant that depends on $k$, $i$, and $\Omega$.
\end{theorem}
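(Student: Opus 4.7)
The plan is to apply the multivariate Mecke formula for Poisson processes to the critical-point characterization of Theorem~\ref{thm:critical_points}, and then extract the factor of $\nu$ from the Euclidean scale invariance of the resulting integrand. First, I would decompose $F_i = \sum_j F_{i,j}$, where $F_{i,j}$ counts critical points in $\Omega$ with $N_c^\partial = j$. By Theorem~\ref{thm:critical_points}, the constraints $\mu_c = N_c - k = i$, $N_c^{\cI} < k$, and $N_c^\partial \le d+1$ force $\max(2,i+1) \le j \le \min(d+1, k+i)$ (the degenerate case $(k,i,j)=(1,0,1)$ is handled directly: critical points are simply the points of $\cP_\nu \cap \Omega$, contributing $\nu|\Omega|$). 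Crucially, a critical configuration with $N_c^\partial = j$ is uniquely determined by its boundary set $\{p_1,\ldots,p_j\}$, because $c$ is the circumcenter of these points in their affine hull and $r_c$ is their circumradius.

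Applying the multivariate Mecke (Slivnyak) formula to the $j$-tuple $\cP_c^\partial$, and using that upon removing $\{p_1,\ldots,p_j\}$ the remaining points still form a Poisson process of intensity $\nu$, the event $|\cP_c^{\cI}| = k+i-j$ has a standard Poisson probability. This yields
\[
\E\{F_{i,j}\} = \frac{\nu^j}{j!}\int_{(\R^d)^j} \ind\{c \in \sigma(p_1,\ldots,p_j)\cap\Omega\}\cdot\frac{(\nu\omega_d r^d)^{k+i-j}}{(k+i-j)!}\,e^{-\nu\omega_d r^d}\, dp_1\cdots dp_j,
\]
where $\omega_d$ denotes the volume of the $d$-dimensional unit ball, and $c,r$ are the circumcenter and circumradius of $(p_1,\ldots,p_j)$.

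Next comes a Blaschke--Petkantschin change of variables, sending $(p_1,\ldots,p_j) \leftrightarrow (c, r, \omega)$, with $\omega$ encoding the $(j-1)$-dimensional affine hull of the $p_l$ (a point in $G_{j-1}(\R^d)$) together with the unit directions $u_l = (p_l-c)/r$ on the sphere of that hull. Euclidean scaling $r \mapsto \lambda r$ leaves $c$ and $\omega$ fixed while multiplying each $p_l - c$ by $\lambda$, so the Jacobian must be $r$-homogeneous of the form $J_{d,j}(c,\omega)\,r^{d(j-1)-1}$ (consistent with the dimension count $d+1+(j-1)(d-j+1)+j(j-2)=jd$). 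The open-simplex condition $c\in\sigma(p_1,\ldots,p_j)$ becomes the purely angular condition $\bs{0}\in\sigma(u_1,\ldots,u_j)$, which factors out of the $r$-integral. Substituting, the total $r$-exponent in the integrand is $d(j-1)-1+d(k+i-j)=d(k+i-1)-1$, and the change of variables $s = \nu\omega_d r^d$ reduces the $r$-integral to a Gamma integral proportional to $\nu^{-(k+i-1)}$. Combined with the $\nu^j\cdot\nu^{k+i-j}=\nu^{k+i}$ prefactor, this leaves exactly one power of $\nu$; the remaining integrations over $c\in\Omega$ (contributing $|\Omega|$) and over $\omega$ (contributing a finite geometric constant) produce the coefficient. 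Summing over admissible $j$ yields $\E\{F_i\}=D_{k,i}\cdot\nu$.

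The main obstacle is justifying the Blaschke--Petkantschin decomposition and the Jacobian formula when $j<d+1$, since there the affine hull of $p_1,\ldots,p_j$ varies over a positive-dimensional Grassmannian and the parameterization is only generically bijective; fortunately, only the $r$-homogeneity of the Jacobian is actually needed for the linearity conclusion, so the precise form of the angular density may be absorbed into the constant $D_{k,i}$ without computing it explicitly.
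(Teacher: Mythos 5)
Your proposal is correct and follows essentially the same route as the paper: decompose by the size of the boundary set $\cP_c^\partial$, apply the Slivnyak--Mecke formula so that the interior-point count becomes a Poisson probability, pass to Blaschke--Petkantschin coordinates $(c,r,\omega)$, and reduce the radial integral to a Gamma integral whose $\nu$-dependence cancels all but one power of $\nu$. Your power counting matches the paper's, and your explicit treatment of the degenerate case $(k,i,j)=(1,0,1)$ is a small but welcome addition that the paper's admissible-range formula does not cover.
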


To prove the above theorem, we  follow the configurations of points in $\cP_\nu$ that generate critical  points for the $k$-NN distance function $d_{\cP_\nu}^{(k)}$.

Let $\cP\subset\Rd$ be a finite set in general position, of size $n>d$. Each critical point of $d_{\cP}^{(k)}$ is associated with a critical configuration of points of $\cP$, as follows. Let $\cX\subset\cP$ of size $l+1$, where $1\leq l\leq d$, and denote by $S(\cX)$ the unique $(l-1)$-dimensional  minimal circumsphere of $\cX$. In addition, denote  
\begin{equation*}
    \begin{split}
        c(\cX) & := \text{the center of } S(\cX), 
        \\
        \rho(\cX) & := \text{the radius of } S(\cX),
        \\
        \cB(\cX) & := \text{the $d$-dimensional ball centered at $c(\cX)$ with radius $\rho(\cX)$}, 
        \\
        \cI(\cX,\cP) & := \mathrm{int}(\cB(\cX))\cap\cP,
        \\
        \mu(\cX,\cP) & := |\cX| + |\cI(\cX,\cP)| - k  
    \end{split}
\end{equation*}
From Theorem  \ref{thm:critical_points} we have that $c=c(\cX)$ is a critical point of $d_{\cP}^{(k)}$ of index $\mu_c:=\mu(\cX,\cP)$, if and only if
\[
0\leq \mu(\cX,\cP)\leq d,\quad\text{and}\quad c\in\sigma(\cX).
\]

\begin{lemma}\label{lemma:F_lm}
Let $\cP_{\nu}\subset\Rd$ be a homogeneous Poisson point process with intensity $\nu>0$. Let $1\leq i\leq d$, and $j\geq 0$. Let $\Omega\subset\Rd$ be a compact subset, and denote by $F_{i,j}$ the number of subsets $\cX\subset\cP_\nu$ of size $|\cX|=i+1$, such that $c(\cX)\in\Omega$, and $|\cI(\cX,\cP)|=j$. Then,
\[
\E\{F_{i,j}\} = D_{d}^{(i,j)}\nu,
\]
where $D_{d}^{(i,j)}$ is a constant that depends on $d$, $i$, $j$, and $\Omega$.
\end{lemma}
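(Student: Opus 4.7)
The plan is a first-moment calculation using the multivariate Mecke formula for the Poisson process, combined with a Blaschke--Petkantschin-type change of variables that separates the circumcenter from the circumradius and the angular shape of the $(i+1)$-tuple.

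First I would write $F_{i,j}$ as $\frac{1}{(i+1)!}$ times a sum over ordered $(i+1)$-tuples of distinct points in $\cP_\nu$ of the indicator $\ind\{c(\bs x)\in\Omega\}\,\ind\{|\Int(\cB(\bs x))\cap\cP_\nu|=j\}$. Applying the multivariate Mecke (Slivnyak) formula, and using that conditional on $\bs x$ the restriction of $\cP_\nu$ to $\Int(\cB(\bs x))$ remains Poisson with mean $\nu\omega_d\rho(\bs x)^{d}$ (where $\omega_d$ is the unit-ball volume in $\Rd$), I would obtain
\[
\E\{F_{i,j}\}\;=\;\frac{\nu^{i+1}}{(i+1)!}\int_{(\Rd)^{i+1}} \ind_{\Omega}\!\bigl(c(\bs x)\bigr)\,\frac{\bigl(\nu\omega_d\rho(\bs x)^{d}\bigr)^{j}}{j!}\,e^{-\nu\omega_d\rho(\bs x)^{d}}\,d\bs x.
\]

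Next I would apply a Blaschke--Petkantschin-type decomposition of Lebesgue measure on $(i+1)$-tuples. Writing $x_l = c+\rho\,\Psi_V(u_l)$ with $c\in\Rd$ the circumcenter, $V\in G(i,d)$ the linear $i$-plane parallel to the affine hull of $\bs x$, $\Psi_V:\R^{i}\hookrightarrow V$ a fixed isometry, $\rho>0$, and $u_l\in S^{i-1}$, the standard Jacobian takes the form
\[
d\bs x \;=\; \rho^{di-1}\,\Delta(u_1,\ldots,u_{i+1})\, dc\,dV\,d\rho\,du_1\cdots du_{i+1},
\]
where $dV$ is the Haar measure on $G(i,d)$ and $\Delta$ depends only on the angular configuration. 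Since the integrand in our integral depends on $\bs x$ only through $c$ and $\rho$, the $V$- and $u$-integrations decouple into a finite constant depending only on $d$ and $i$, while the $c$-integration contributes $|\Omega|$.

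The surviving $\rho$-integral is
\[
\int_0^\infty \rho^{di-1}\,\frac{(\nu\omega_d\rho^{d})^{j}}{j!}\,e^{-\nu\omega_d\rho^{d}}\,d\rho,
\]
which, after the substitution $u=\nu\omega_d\rho^{d}$, equals $\frac{(i+j-1)!}{d\,j!\,(\nu\omega_d)^{i}}$. Combining with the Mecke prefactor $\nu^{i+1}/(i+1)!$, the $\nu^{i}$ from the $\rho$-integral cancels all but one power of $\nu$, yielding $\E\{F_{i,j}\}=D_d^{(i,j)}\nu$ with $D_d^{(i,j)}$ a finite constant depending on $d,i,j$ and $|\Omega|$.

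The main technical step is the Blaschke--Petkantschin change of variables: one must verify the Jacobian exponent $\rho^{di-1}$ (trivial for $i=d$, where $G(i,d)$ is a singleton, and standard in integral geometry for $i<d$; see e.g.\ Schneider and Weil, \emph{Stochastic and Integral Geometry}) and check finiteness of the angular integral $\int_{(S^{i-1})^{i+1}}\Delta\,du_1\cdots du_{i+1}$, which is immediate since the integrand is bounded on a compact domain. Genericity of the Poisson process guarantees that the circumcenter--circumsphere parameterization is well-defined almost everywhere, and the assumption $i\ge 1$ ensures the $\rho$-integrand $\rho^{d(i+j)-1}$ is integrable at the origin.
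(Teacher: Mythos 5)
Your proposal is correct and follows essentially the same route as the paper: Mecke--Slivnyak formula, then a Blaschke--Petkantschin change of variables $\mathbf{x}\to(c,\rho,\Pi,\bs\theta)$ with Jacobian factor $\rho^{di-1}$, then the substitution $t=\nu\omega_d\rho^d$ producing a Gamma integral and the single surviving power of $\nu$. The only (immaterial) difference is that the paper's integrand additionally carries the bounded indicator $\ind\{c(\cX)\in\sigma(\cX)\}$, which only changes the angular constant.
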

For a proof of the above lemma see Appendix \ref{app:A}.

\begin{proof}[Proof of Theorem \ref{thm:exp_crit_pts}]
Recall from Section \ref{sec:main_res} that $\mu(\cX,\cP) = |\cX| + |\cI(\cX,\cP)| - k$ is the index of the generated critical point. In addition, $2\leq |\cX|\leq d+1$, and $0\leq|\cI(\cX,\cP)|\le k-1$.
The last three terms, limit the possible values $|\cX|$ can take, namely 
\[
\max\{2,\mu(\cX,\cP)+1\}\leq|\cX|\leq\min\{d+1,\mu(\cX,\cP)+k\},
\]
and $|\cI(\cX,\cP)| = \mu(\cX,\cP) + k - |\cX|$.
Thus, the number of critical points of index $\mu(\cX,\cP) = i\geq 0$, is given by   
\[
F_{i} = \sum_{i'=I_1}^{I_2}F_{i',j'},
\]
where $I_1:=\max\{1,i\}$,  $I_2:=\min\{d,i+k-1\}$, and $j' = i+k-i'-1$.
By taking the expected value and applying Lemma \ref{lemma:F_lm}, we have
\[
\E\{F_{i}\} = \sum_{i=I_1}^{I_2}\E\{F_{i',j'}\} = \nu \sum_{i'=I_1}^{I_2} D_{d}^{(i',j')}
.
\]
Setting $D_{k,i}:=\sum_{i'=I_1}^{I_2}D_{d}^{(i',j')}$ concludes the proof.
\end{proof}

\section{Discussion}
In this paper we studied the $k$-NN distance function $d_{\cP}^{(k)}$. We showed that using the Morse theory for piecewise smooth functions we can derive simple combinatorial-geometric characterization for critical points and their indices. In addition, we showed the effect of such critical points on the homology of the sub-level sets. We observe that the behavior of $d_{\cP}^{(k)}$ is similar to classical Morse theory, in the sense that if the index is $\mu_c$ the homology affected is only in dimensions $\mu_c$ (positively) and $\mu_c-1$ (negatively). However, in contrast to classical Morse theory, at each critical level there can be several simultaneous changes to homology. 
Our results provide new means to analyze the homology and persistent homology of the $k$-degree Delaunay mosaics. 
In addition, they will be instrumental for the analysis of random $k$-fold coverage and its homology. Specifically, counting critical faces, as we present in Theorem \ref{thm:exp_crit_pts}, will allow us to draw conclusions about the homology of the random $k$-fold coverage objects, in different regimes.
This remains future work.



\newpage
\bibliography{relevant_bib}
\newpage

\appendix

\section{Proof of Lemma \ref{lemma:F_lm}} \label{app:A}
In the following, we provide the proof for Lemma \ref{lemma:F_lm}.
\begin{proof}[Proof of Lemma \ref{lemma:F_lm}]
Fix  $1\le i \le d$, and $j\ge 0$. For finite subsets $\cX\subset \cP\subset \R^d$, with $|\cX| = i+1$, define 
\[
h_{\sigma}(\cX) := \ind\{c(\cX)\in\sigma(\cX)\},\quad 
h_{\cI}(\cX,\cP)=\ind\{|\cI(\cX,\cP)| = j\},
\]
and
\[        
g(\cX,\cP) := h_{\sigma}(\cX)h_{\cI}(\cX,\cP)\ind\{c(\cX)\in\Omega\}. 
\]
Using these notations, we can express $F_{i,j}$ as 
\[
F_{i,j} = \sum_{\substack{\cX\subset\cP_{\nu} \\ |\cX|=i+1}} g(\cX,\cP_{\nu}).
\]
Taking the expectation, and applying the Slivnyac-Mecke formula (see Corollary 3.2.3 in \cite{schneider_stochastic_2008}), yields
\begin{equation}\label{eq:exp_Fij} 
\E\{F_{i,j}\} 
= \frac{\nu^{i+1}}{(i+1)!}		
        \int\limits_{(\Rd)^{i+1}}
		\E
		\{
	g(\mathbf{x}, {{\cP}_{\nu}}\cup \mathbf{x})
		\}
        d\mathbf{x}
        ,
\end{equation}
where abusing notation we treat $\mathbf{x}$ as both an ordered tuple and a set.
For a fixed $\mathbf{x}$ we have
\[
\E\{g(\mathbf{x},{{\cP}_{\nu}}\cup \mathbf{x})\}
=
\frac{\left(\nu\omega_d\rho(\mathbf{x})^d\right)^{j}}{j!}
e^{-\nu\omega_d\rho(\mathbf{x})^d}
h_{\sigma}(\mathbf{x})\ind\{c(\mathbf{x})\in\Omega\}.
\]
Next, we use generalized spherical coordinates (a Blaschke-Petkantschin formula \cite{miles1971isotropic}). Assuming the points in $\mathbf{x}$ are in general position (which is true almost surely), they lie on a unique $i$-dimensional linear space, denoted $\Pi(\mathbf{x})$ (that includes $c(\mathbf{x})$). Recall that the points of $\mathbf{x}$ lie on a $(i-1)$-dimensional sphere centered at $c(\mathbf{x})$ of radius $\rho(\mathbf{x})$. We will denote $\bs{\theta}(\mathbf{x})\subset\Sp^{i-1}$ the spherical coordinates of $\mathbf{x}$ on this sphere. We are interested in the bijective transformation $\mathbf{x}\rightarrow (c,\rho,\Pi,\bs{\theta})$. 

Turning back to the integral in \eqref{eq:exp_Fij}, and applying Lemma C.1 in \cite{bobrowski_homological_2019}, we have
\[ 
\begin{split}
\int\limits_{(\Rd)^{i+1}}
\E
\{
g(\mathbf{x}, &{{\cP}_{\nu}}\cup \mathbf{x})
\}
d\mathbf{x}
\\
&
=
|\Omega| |\mathrm{Gr}(d,i)|
\int\limits_{0}^{\infty}
\int\limits_{(\Sp^{i-1})^{i+1}}
\rho^{di-1} 
\frac{\left(\nu\omega_d \rho^d\right)^{j}}{j!}
e^{-\nu\omega_d \rho^d}
h_{\sigma}(\bs{\theta})
(i!\vsimp(\bs{\theta}))^{d-i+1}
d{\bs{\theta}}d\rho
\\
&
=
C_{d}^{(i,j)} \nu^{j}
\int\limits_{0}^{\infty}
\rho^{d(i+j)-1} 
e^{-\nu\omega_d \rho^d}
d\rho,    
\end{split}
\]
where $\vsimp(\bs{\theta})$ stands for the $i$-dimensional volume of the simplex spanned by $\bs{\theta}$,  
\[
C_{d}^{(i,j)} := \frac{|\Omega||\mathrm{Gr}(d,i)|\omega_d^{j}}{j!}\int\limits_{(\Sp^{i-l})^{i+1}}h_{\sigma}(\bs{\theta})
(i!\vsimp(\bs{\theta}))^{d-i+1} d\bs{\theta},
\]
and $\mathrm{Gr}(d,i)$ is the volume of the $i$-dimensional Grassmannian in $\R^d$.
Taking the  change of variable $t = \nu\omega_d \rho^d$, yields 
\[
\begin{split}
\int\limits_{(\Rd)^{i+1}}
\E
\{
g(\mathbf{x}, {{\cP}_{\nu}}\cup \mathbf{x})
\}
d\mathbf{x}
&
=
\tilde{C}_{d}^{(i,j)} \nu^{-i} 
\int\limits_{0}^{\infty}
t^{i+j-1} 
e^{-t}
dt
=
\tilde{C}_{d}^{(i,j)} \nu^{-i} (i+j-1)!
\end{split}
\]
where $\tilde{C}_{d}^{(i,j)} := \frac{C_{d}^{(i,j)}}{d \omega_d^{i+j}}$. Going back to  \eqref{eq:exp_Fij}, we have 
\[\E\{F_{i,j}\} = D_{d}^{(i,j)} \nu,
\]
where $D_{d}^{(i,j)}:=\frac{\tilde{C}_{d}^{(i,j)} (i+j-1)!  }{(l+1)!}$, concluding the proof.
\end{proof}

\section{Blaschke–Petkantschin-type formula}
The following lemma (Lemma C.1 in \cite{bobrowski_homological_2019}), introduces a change of variables that extends the idea of polar coordinates.
For further details, see \cite{bobrowski_homological_2019}.

Let $\mathbf{x}=(x_1,\ldots,x_{d+1})\subset(\mathbb{T}^d)^{d+1}$, and consider the following mapping $\mathbf{x}\rightarrow(c,\rho,\Pi,\bs{\theta})$ defined in Appendix \ref{app:A}.
Next, let $f:(\mathbb{T}^d)^{k+1}\rightarrow\R$ be affine invariant. This implies that
\begin{equation}\label{eq:aff_inv}
f(\mathbf{x})=f(c+\rho\bs{\theta}(\Pi)) = f(\rho\bs{\theta}(\Pi_0)) := f(\rho\bs{\theta}),    
\end{equation}
where $\Pi_0$ is the canonical embedding of $\R^k$ in $\R^d$ as $\R^k\times\{0\}^k$.
\begin{lemma}[Lemma C.1 in \cite{bobrowski_homological_2019}]\label{lem:C1}
Let $f:(\mathbb{T}^d)^{k+1}\rightarrow\R$ be a measurable bounded function
satisfying \eqref{eq:aff_inv}. Then,
\begin{equation*}
\int_{(\mathbb{T}^d)^{k+1}}f(\mathbf{x})d\mathbf{x} = 
D_{bp}\int_{0}^{\infty}\int_{(\mathbb{S}^{k-1})^{k+1}}\rho^{dk-1}
f(\rho\bs{\theta})(V_{\mathrm{simp}}(\bs{\theta}))^{d-k+1}d\bs{\theta}d\rho,
\end{equation*}
\end{lemma}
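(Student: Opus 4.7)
The plan is to execute the classical Blaschke--Petkantschin change of variables $\mathbf{x}\mapsto(c,\rho,\Pi,\bs{\theta})$ and then exploit the affine invariance of $f$ to factor out the $c$- and $\Pi$-integrals.

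First I would verify that the proposed map is a bijection modulo a Lebesgue null set. For $\mathbf{x}\in(\T^d)^{k+1}$ in general position, the $k+1$ points span a unique $k$-dimensional affine subspace $c+\Pi$, where $c$ is its in-space circumcenter (the unique point equidistant from all $x_i$ within $c+\Pi$) and $\Pi\in\mathrm{Gr}(d,k)$ is its direction. Setting $\rho:=\|x_1-c\|$ and $\bs{\theta}_i:=(x_i-c)/\rho\in\Sp^{k-1}\subset\Pi$ then recovers $\mathbf{x}$ from $(c,\rho,\Pi,\bs{\theta})$, and the parameter count $d+1+k(d-k)+(k+1)(k-1)=d(k+1)$ matches the dimension of $(\T^d)^{k+1}$.

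Second, I would establish the Jacobian identity
\[
d\mathbf{x}\;=\;D_{bp}\,\rho^{dk-1}\,V_{\mathrm{simp}}(\bs{\theta})^{d-k+1}\,dc\,d\rho\,d\Pi\,d\bs{\theta},
\]
for a constant $D_{bp}$ depending only on $d$ and $k$. The infinitesimal change of $\mathbf{x}$ splits into a component along $\Pi$ and one transverse to $\Pi$. Along $\Pi$, the polar change of variables $x_i=c+\rho\bs{\theta}_i$ applied to each of the $k+1$ points contributes a radial factor $\rho^{(k-1)(k+1)}=\rho^{k^2-1}$ together with one power of $V_{\mathrm{simp}}(\bs{\theta})$, the latter arising from the Jacobian relating the $(k+1)k$ in-plane Lebesgue coordinates to the parameters $(c_{\Pi},\rho,\bs{\theta})$ under the circumcenter constraint. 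Transverse to $\Pi$, an infinitesimal tilt of the $k$-plane displaces each point by an amount proportional to $\rho$ and to its in-plane position, and the standard affine Blaschke--Petkantschin computation yields an additional factor $V_{\mathrm{simp}}(\bs{\theta})^{d-k}\rho^{k(d-k)}$. Combining gives the total exponents $\rho^{k^2-1+k(d-k)}=\rho^{dk-1}$ and $V_{\mathrm{simp}}(\bs{\theta})^{1+(d-k)}=V_{\mathrm{simp}}(\bs{\theta})^{d-k+1}$. Third, I would apply the affine invariance \eqref{eq:aff_inv}: since $f(c+\rho\bs{\theta}(\Pi))=f(\rho\bs{\theta}(\Pi_0))$ does not depend on $c$ or $\Pi$, the integrations over $c\in\T^d$ and $\Pi\in\mathrm{Gr}(d,k)$ factor out, contributing a constant $|\T^d|\cdot|\mathrm{Gr}(d,k)|$, which I would absorb into $D_{bp}$ to obtain the stated identity.

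The main obstacle is the Jacobian calculation in step two, and in particular pinning down the exponent $d-k+1$ of $V_{\mathrm{simp}}(\bs{\theta})$ rather than the naive $d-k$ of the bare affine Blaschke--Petkantschin formula; the ``extra'' power of the simplex volume enters when one trades the in-plane circumcenter $c_{\Pi}$ for the joint $(\rho,\bs{\theta})$ parametrization of the configuration. Since this is a classical computation, I would rely on Miles' derivation \cite{miles1971isotropic} for the Euclidean case; the transition to the torus $\T^d$ is harmless because the change of variables is purely local in $c$, and the only effect is to replace an infinite $c$-integration by a finite one over $\T^d$ of volume $|\T^d|$, which is absorbed into $D_{bp}$.
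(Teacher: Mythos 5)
The paper does not prove this lemma at all: it is imported verbatim as Lemma C.1 of \cite{bobrowski_homological_2019}, and the appendix merely restates it, so there is no in-paper argument to compare against. Your proposal is a correct reconstruction of the standard derivation: the dimension count $d+1+k(d-k)+(k^2-1)=d(k+1)$ is right, and the two-stage factorization --- the affine Blaschke--Petkantschin formula contributing $V_{\mathrm{simp}}(\mathbf{x})^{d-k}=\rho^{k(d-k)}V_{\mathrm{simp}}(\bs{\theta})^{d-k}$ transverse to $\Pi$, and the in-plane circumcenter/polar change of variables contributing $\rho^{k^2-1}$ times one further power of $V_{\mathrm{simp}}(\bs{\theta})$ --- assembles to exactly $\rho^{dk-1}V_{\mathrm{simp}}(\bs{\theta})^{d-k+1}$, consistent with $D_{bp}=(k!)^{d-k+1}\Gamma_{d,k}$. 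Deferring the explicit Jacobian to Miles \cite{miles1971isotropic} is legitimate and is effectively what the cited source does.

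Two small points you should tighten. First, the in-plane Jacobian factor $k!\,V_{\mathrm{simp}}(\bs{\theta})\,\rho^{k^2-1}$ is asserted rather than derived; the homogeneity argument (scaling $\rho$ and $c$ simultaneously forces degree $k(k+1)-(k+1)=k^2-1$ in $\rho$) pins down the radial exponent, but the $V_{\mathrm{simp}}(\bs{\theta})$ dependence genuinely requires the determinant computation you are outsourcing. Second, the torus is not quite as harmless as you claim: the map $\mathbf{x}\mapsto(c,\rho,\Pi,\bs{\theta})$ ceases to be well defined (let alone bijective) once $\rho$ is comparable to the injectivity radius of $\T^d$, so the identity with $\int_0^\infty d\rho$ must be read either for $f$ supported on configurations of small circumradius or with the convention used in \cite{bobrowski_homological_2019}; a one-line remark restricting the support of $f$ in $\rho$ would close this.
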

where $V_{\mathrm{simp}}(\bs{\theta})$ is the volume of the $k$-simplex spanned by $\bs{\theta}$, $D_{bp}=(k!)^{d-k+1}\Gamma_{d,k}$, and $\Gamma_{d,k}$ is the volume of the Grassmannian $\mathrm{Gr}(d,k)$.

\end{document}